\newcommand{\abs}[1]{\lvert{#1}\rvert}
\newcommand{\set}[1]{\{{#1}\}}
\newcommand{\setfit}[1]{\left\{{#1}\right\}}
\newcommand{\parensfit}[1]{\left({#1}\right)}
\newcommand{\length}[1]{\lVert{#1}\rVert}
\theoremstyle{plain}            % following are "theorem" style
\newtheorem{theorem}{Theorem}[section]
\newtheorem{lemma}[theorem]{Lemma}
\newtheorem{corollary}[theorem]{Corollary}
\newtheorem{fact}[theorem]{Fact}
\theoremstyle{definition}       % following are def style
\newtheorem{definition}[theorem]{Definition}
\theoremstyle{remark}           % following are remark style
\newtheorem{remark}[theorem]{Remark}
\numberwithin{equation}{section}
\newif\ifnotes\notesfalse
\newcommand{\cnote}[1]{{\bf (Chris:} {#1}{\bf ) }}
\newcommand{\enote}[1]{{\bf (Elena:} {#1}{\bf ) }}
\newcommand{\cnote}[1]{}
\newcommand{\enote}[1]{}
\newcommand{\Z}{\mathbb{Z}}
\newcommand{\G}{\mathbb{G}}
\newcommand{\C}{\mathbb{C}}
\newcommand{\F}{\mathbb{F}}
\newcommand{\R}{\mathbb{R}}
\newcommand{\RM}[2]{\text{RM}^{#1}_{#2}}
\newcommand{\BW}{\text{BW}}
\newcommand{\eps}{\epsilon}
\newcommand{\call}{\mathcal{L}}
\newcommand{\calt}{\mathcal{T}}
\newcommand{\calc}{\mathcal{C}}
\newcommand{\rsd}{\ensuremath{\mathsf{rsd}}\xspace}
\newcommand{\rsmd}{\ensuremath{\mathsf{rsmd}}\xspace}
\newcommand{\bit}{\ensuremath{\set{0,1}}}
\DeclareMathOperator{\poly}{poly}
\DeclareMathOperator{\polylog}{polylog}
\title{List Decoding Barnes-Wall Lattices}
\author{Elena Grigorescu\thanks{School of Computer Science, Georgia
    Institute of Technology.  Email: \texttt{elena\_g@csail.mit.edu}.
    This material is based upon work supported by the National Science
    Foundation under Grant \#1019343 to the Computing Research
    Association for the CI Fellows Project.}  \and Chris
  Peikert\thanks{School of Computer Science, Georgia Institute of
    Technology.  Email: \texttt{cpeikert@cc.gatech.edu}.  This
    material is based upon work supported by the National Science
    Foundation under CAREER Award~CCF-1054495 and the Alfred P.~Sloan
    Foundation.  The views expressed are those of the authors and do
    not necessarily reflect the official policy or position of the
    National Science Foundation of the Sloan Foundation.}}
\begin{document}

\maketitle
%conf{1} means with appendix
%conf{0} means with proofs in main body
\def\conf{0} 

\thispagestyle{empty}
\begin{abstract}
  \iffalse

\else
%normal abstract with math symbols
The question of \emph{list decoding} error-correcting codes over
finite fields (under the Hamming metric) has been widely studied in
recent years.  Motivated by the similar discrete linear structure of
linear codes and \emph{point lattices} in $\R^{N}$, and their many
shared applications across complexity theory, cryptography, and coding
theory, we initiate the study of list decoding for lattices.  Namely:
for a lattice $\call\subseteq \R^N$, given a target vector $r\in \R^N$
and a distance parameter $d$, output the set of all lattice points $w
\in \call$ that are within distance $d$ of $r$.

In this work we focus on combinatorial and algorithmic questions
related to list decoding for the well-studied family of
\emph{Barnes-Wall} lattices.  Our main contributions are twofold:
\begin{enumerate}
\item We give tight (up to polynomials) combinatorial bounds on the
  worst-case list size, showing it to be polynomial in the lattice
  dimension for any error radius bounded away from the lattice's
  minimum distance (in the Euclidean norm).
\item Building on the \emph{unique} decoding algorithm of Micciancio
  and Nicolosi (ISIT '08), we give a list-decoding algorithm that runs
  in time polynomial in the lattice dimension and worst-case list
  size, for any error radius.  Moreover, our algorithm is highly
  parallelizable, and with sufficiently many processors can run in
  parallel time only \emph{poly-logarithmic} in the lattice dimension.
\end{enumerate}
In particular, our results imply a polynomial-time list-decoding
algorithm for any error radius bounded away from the minimum distance,
thus beating a typical barrier for natural error-correcting codes
posed by the Johnson radius.  \fi

%%% Local Variables: 
%%% mode: latex
%%% TeX-master: "BWlist"
%%% End: 

\end{abstract}

\newpage

\setcounter{page}{1}

\section{Introduction}

A linear error-correcting \emph{code} $\calc$ of block length $N$ and
dimension $K$ over a field $\F$ is a $K$-dimensional subspace of
$\F^{N}$, generated as all $\F$-linear combinations of $K$ linearly
independent vectors.  The code's \emph{minimum distance}, denoted
$d(\calc)$, is the minimum Hamming distance between any two distinct
codewords in $\calc$, or equivalently the minimum Hamming weight over
all nonzero codewords.  It is often convenient to normalize distances
by the dimension, yielding the \emph{relative} (minimum) distance
$\delta(\calc)=d(\calc)/N$ of the code.  Similarly, a point
\emph{lattice} of dimension~$N$ and rank~$K$ (where often $K=N$) is a
discrete additive subgroup of~$\R^{N}$ (or~$\C^{N}$), generated as all
integer linear combinations of $K$ linearly independent vectors.  The
lattice's minimum distance $\lambda(\call)$ is the minimum Euclidean
norm over all nonzero lattice points $x \in \call$.  Here it can also
be convenient to normalize by the dimension, and for a closer analogy
between the Hamming and Euclidean distances, in what follows we work
with the \emph{relative squared} distance (abbreviated \rsd)
$\delta(x,y)=\delta(x-y)$ on $\R^{N}$ or $\C^{N}$, where
$\delta(z)=\frac{1}{N} \length{z}^{2}=\frac{1}{N} \sum_{i=1}^{N}
\abs{z_{i}}^{2}$.  The {\em relative squared minimum distance} (abbreviated
$\rsmd$) $\delta(\call)$ of a lattice is therefore
$\delta(\call)=\lambda(\call)^{2}/N$.

Codes and lattices are intensely studied objects, with many
applications in computational complexity, cryptography, and coding
theory.  In particular, both kinds of objects can be used to encode
data so that it can be recovered reliably after being sent over a
noisy channel.  A central question associated with codes is
\emph{unique decoding}: given a received word $r \in \F^{N}$ within
relative Hamming distance less than ${\delta(\calc)}/{2}$ of some
codeword $w \in \calc$, find $w$.  Similarly, the unique (also known
as bounded-distance) decoding problem on lattices is: given a received
word $r \in \R^{N}$ within \rsd less than ${\delta(\call)}/{4}$ of
some lattice vector $v \in \call$, find $v$.  (Note that the $1/4$
factor arises because distances are squared in our formulation.)

For error-correcting codes, Elias~\cite{elias} and
Wozencraft~\cite{wozencraft} proposed extending the classical unique
decoding problem to settings where the amount of error could cause
ambiguous decoding.  More precisely, the goal of \emph{list decoding}
is to find all codewords within a certain relative distance (typically
exceeding ${d(\calc)}/{2}$) of a received word; in many cases, the
list is guaranteed to contain few codewords.  The first breakthrough
algorithmic list decoding results were due to Goldreich and
Levin~\cite{DBLP:conf/stoc/GoldreichL89} for the Hadamard code, and to
Sudan~\cite{Sudan97} and Guruswami-Sudan~\cite{GuSu} for Reed-Solomon
codes.  These results and others have had countless applications,
e.g., in building hard-core predicates for one-way
functions~\cite{DBLP:conf/stoc/GoldreichL89}, in hardness
amplification~\cite{SudanTV01-journal}, in learning Fourier
coefficients~\cite{KM-fourier,GGIMS,AGS}, and in constructing
randomness
extractors~\cite{tashma-zuckerman,Trevisan01,GuruswamiUV09}.

There are two central tasks associated with list decoding:
combinatorially bounding the number of codewords within a given radius
of a received word, and algorithmically finding these codewords.  An
important question in understanding list decodability is finding the
\emph{list-decoding radius} of the code, i.e., the maximum distance
from a received word within which the number of codewords is
guaranteed to be polynomial in the input parameters.

\ifnum\conf=2
\smallskip
{\bf The Johnson bound.}
\else
\vspace{-8pt}
\paragraph{The Johnson bound.}
\fi
Under the Hamming metric, the \emph{Johnson bound} gives a distance up
to which list decoding is guaranteed to be combinatorially efficient.
One version of the Johnson bound states that for any code $\calc$ of
relative distance $\delta$, a Hamming ball of
relative radius $J(\delta)-\eps$ contains at most $1/\epsilon^{2}$
codewords, and a ball of relative radius $J(\delta)$ contains at most
$\delta N^{2} \abs{\F}$ codewords, where
$J(\delta)=1-\sqrt{1-\delta}$.  The Johnson bound is generic since it
does not use any structure of the code (not even linearity), and in
many cases it is not necessarily the same as the list-decoding radius.
It is, however, a barrier in the current analysis of combinatorial
list decoding for many well-studied families like Reed-Solomon codes,
algebraic geometry codes, Chinese remainder codes, and others.  The
breakthrough works of Parvaresh-Vardy~\cite{PV} and
Guruswami-Rudra~\cite{DBLP:conf/stoc/GuruswamiR06} gave families of
codes which could be (efficiently) list decoded beyond the Johnson
bound, and were followed by several related combinatorial and
algorithmic results for other codes
(e.g.,~\cite{DGKS08,GopalanKZ08,KaufmanLP10,GopalanGR09}).  For more
detailed surveys on list decoding of codes we refer to
\cite{Sudan00,Guruswami2004,Guruswami06, Guruswamibridgingshannon}.

\subsection{Contributions}
\label{sec:contributions}

Motivated by the common discrete linear structure of codes and
lattices, we initiate the study of list decoding for lattices, from
both a combinatorial and algorithmic perspective.
Conway and Sloane~\cite{ConwayS98} promoted the applicability of
lattices in practice as alternatives to codes.  Therefore, our study
is motivated by practical applications in error-tolerant
communication, but primarily by the naturalness of the list-decoding
problem from a mathematical and computational perspective, and we hope
that our work will find other applications in theoretical computer
science.

In this work we focus on the \emph{Barnes-Wall} (BW)~\cite{BW59}
family of lattices in $\C^{N}$, which have been well-studied in coding
theory (see,
e.g.,~\cite{Forney88a,ForneyV96,amir98:_trell_compl,NebeRS01,SalomonA05})
and share many connections to the Reed-Muller~\cite{RMuller,ReedM}
family of error-correcting codes (we elaborate below).  Barnes-Wall
lattices were first constructed in order to demonstrate dense sphere
packings, a feature that makes them useful in communications settings.
Minimum-distance decoding algorithms for BW lattices were given
in~\cite{Forney88a,ran98:_effic_decod_of_gosset_coxet,WangST95}, but
they are either for fixed low dimensions or have runtimes exponential
in the lattice dimension~$N$.  Micciancio and Nicolosi~\cite{MN08}
gave the first $\poly(N)$-time algorithms for bounded-distance
(unique) decoding of any BW lattice up to $\delta/4$ relative error,
along with parallel versions which run in as little as $\polylog(N)$
parallel time on sufficiently many processors.  They also posed list
decoding of BW lattices as an open problem.

Our main contributions are twofold:
\begin{enumerate}%[itemsep=0pt]
\item We give tight (up to polynomials) combinatorial bounds on the
  worst-case list size for BW lattices, showing it to be polynomial in
  the lattice dimension $N$ for any relative squared distance (\rsd)
  bounded away from the \rsmd $\delta$ of the lattice.  (See
  Theorems~\ref{thm:listbound} and~\ref{thm:lower-bound} below for
  precise statements.)  We note that it was already known that the
  list size is super-polynomial $N^{\Theta(\log N)}$ when the \rsd
  equals $\delta$ (see, e.g.,~\cite[Chapter 1, \S 2.2, page
  24]{ConwayS98}).

\item We give a corresponding list-decoding algorithm that, for any
  \rsd, runs in time polynomial in the lattice dimension and
  worst-case list size.  Our algorithm is a variant of the
  Micciancio-Nicolosi unique-decoding algorithm, and as such it is
  also highly parallelizable: with sufficiently many processors it
  runs in only poly-logarithmic $O(\log^{2} N)$ parallel time.
\end{enumerate}
%In particular, our results imply a polynomial-time list-decoding
%algorithm for relative error rate $1-\epsilon$, for any constant
%$\epsilon > 0$.

We note that Johnson-type bounds for lattices are known and easy to
obtain (in fact, the Johnson bound for codes under the Hamming metric
is typically proved by reducing it to a packing bound in $\R^{N}$
under the Euclidean norm; see, e.g.,~\cite{Bollobas, GS-johnson,
  madhulectnotes, MGbook02}).  For a lattice $\call \subset \C^{N}$
with \rsmd $\delta$, the list size for \rsd $\delta \cdot
(\tfrac{1}{2}-\eps)$ is at most $\frac{1}{2\eps}$, and for \rsd
$\frac{\delta}{2}$ is at most $4N$ (see Lemma~\ref{lem:johnson}).
Interestingly, the latter bound is tight for BW lattices (see
Corollary~\ref{cor:johnson}).  Since $\delta = 1$ for every BW
lattice, our combinatorial and algorithmic results for \rsd up to $1$
therefore apply far beyond the Johnson bound.

To describe our results in more detail, we need to define Barnes-Wall
lattices.  Let $\G=\Z[i]$ be the ring of Gaussian integers, and let
$\phi = 1+i\in \G$.
\begin{definition}[Barnes-Wall lattice]
  \label{def:BW}
  The $n$th Barnes-Wall lattice $\BW_{n} \subseteq \G^{N}$ of
  dimension $N=2^n$ is defined recursively as $\BW_{0} = \G$, and for
  positive integer $n \geq 1$ as
  \[ \BW_{n}=\setfit{ [u, u+\phi v] : u,v\in \BW_{n-1} }. \]
\end{definition}
One can check that $\BW_{n}$ is a lattice; indeed, it is easy to
verify that it is generated as the $\G$-linear combinations of the
rows of the $n$-fold Kronecker product 
\[ W=\begin{bmatrix} 1 & 1 \\ 0 & \phi \end{bmatrix}^{\otimes n} \in
\C^{N \times N}. \] A simple induction proves that the minimum
distance of $\BW_{n}$ is $\sqrt{N}$, i.e., its \rsmd is
$\delta=1$.\footnote{The fundamental volume of $\BW_{n}$ in $\C^{N}$
  is $\det(W)=2^{nN/2}$, so its determinant-normalized minimum
  distance is $\sqrt{N}/\det(W)^{1/(2N)}=\sqrt[4]{N}$.  This is better
  than the normalized minimum distance $1$ of the integer lattice
  $\G^{N}$, but worse that the largest possible of $\Theta(\sqrt{N})$
  for $N$-dimensional lattices.}  Also observe that if $[u, w=u+\phi
v]\in \BW_{n}$ for $u,w \in \C^{N/2}$, then $[w, u]\in \BW_{n}$:
indeed, we have $w,-v \in \BW_{n-1}$ and so $[w,u=w + \phi \cdot -v]
\in \BW_{n}$.  The mathematical and coding properties of Barnes-Wall
lattices have been studied in numerous works,
e.g.,~\cite{AgrawalV00,ConwayS98,Forney88a,ForneyV96,NebeRS01,SalomonA05,WangST95,MN08}.

% To simplify notation, in what follows we use the notion of
% \emph{relative squared distance}, abbreviated \rsd, and use this as
% the most convenient notion of ``error rate.''  For $x \in \C^{m}$,
% define its \rsd as $\delta(x) = \frac{1}{m}\length{x}^{2}$, and for
% $x, y\in \C^m$ define $\delta(x,y)=\delta(x-y)$. Note that the \rsmd
% of a lattice $\call$ (defined above) satisfies $ \delta(\call) =
% \min_{x,y \in \call, x\neq y} \delta(x,y) = \min_{0 \neq x \in \call}
% \delta(x)$, and that the \rsmd of $\BW_{n}$ is $1$. Also, note that the unique decoding bound is \rsd ${\delta(\call)}/{4}$, and that the  Johnson bound applies up to \rsd ${\delta(\call)}/{2}.$ 
%  We use the
% base-$2$ logarithm throughout the paper.

\ifnum\conf=2
\smallskip
{\bf Combinatorial bounds.}
\else
\paragraph{Combinatorial bounds.}
\fi Let $\ell(\eta, n)$ denote the worst-case list size (over all
received words) for $\BW_{n}$ at \rsd $\eta$.  We prove the following
upper bound.

\begin{theorem}
  \label{thm:listbound}
  For any integer $n \geq 0$ and real $\epsilon > 0$, we have
  \[ \ell(1-\epsilon, n) \leq
  4\cdot(1/\epsilon)^{16n}=N^{O(\log(1/\epsilon))}. \]
\end{theorem}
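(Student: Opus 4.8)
The plan is to prove the bound by induction on $n$, exploiting the recursive structure $\BW_{n}=\{[u,u+\phi v]:u,v\in\BW_{n-1}\}$. Fix a received word $r=[r_{1},r_{2}]\in\C^{N}$ with $r_{1},r_{2}\in\C^{N/2}$, and suppose $w=[u,u+\phi v]\in\BW_{n}$ lies within \rsd $1-\epsilon$ of $r$, i.e.\ $\length{r_{1}-u}^{2}+\length{r_{2}-u-\phi v}^{2}\le (1-\epsilon)N$. Since $\length{\phi}^{2}=2$, and writing the left and right halves separately, the two ``coordinates'' $u$ and $u+\phi v$ of $w$ live in $\BW_{n-1}$, which has minimum distance $\sqrt{N/2}$. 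The key idea is to split the total error budget between the two halves: there is a threshold $t\in(0,1)$ (to be optimized, roughly $t\approx 1-\epsilon/2$ or similar) such that either the left half contributes \rsd at most $t$ to the received word $r_{1}$, or it contributes more and then—because the total is at most $1-\epsilon$—the right half is constrained. First I would make this dichotomy precise: for every candidate $w$ on the list, at least one of the following holds: (i) $\delta_{N/2}(r_{1},u)\le t$; or (ii) $\delta_{N/2}(r_{2},u+\phi v)\le t'$ for a correspondingly smaller $t'$. Each of these is a list-decoding instance for $\BW_{n-1}$ at a smaller \rsd, so the inductive hypothesis bounds the number of possibilities for $u$ (resp.\ for $u+\phi v$).

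The subtlety is that knowing $u$ does not by itself pin down $w$: we also need to count the possible $v$, equivalently the possible values of $u+\phi v\in\BW_{n-1}$. Once $u$ is fixed, the right half must satisfy $\length{r_{2}-u-\phi v}^{2}\le (1-\epsilon)N-\length{r_{1}-u}^{2}$, which is a list-decoding problem for the \emph{coset} $\phi\BW_{n-1}$ (a scaled sublattice of $\BW_{n-1}$, with minimum distance $\sqrt{2}\cdot\sqrt{N/2}=\sqrt{N}$) centered at $r_{2}-u$. Here I would use the fact that $\phi\BW_{n-1}\subseteq\BW_{n}$ has \rsmd exactly $1$ in dimension $N$, but more usefully that it is isometric (up to the scaling by $\phi$) to $\BW_{n-1}$ scaled by $\sqrt 2$; so decoding it at \rsd $\eta$ in dimension $N$ is the same as decoding $\BW_{n-1}$ at \rsd $\eta/1$... — one must track the normalization carefully — and in any case the number of such points is again controlled by $\ell(\cdot,n-1)$ at a strictly smaller radius. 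Combining, the list size satisfies a recursion of the shape $\ell(1-\epsilon,n)\le \ell(1-\epsilon',n-1)\cdot\ell(1-\epsilon'',n-1)$ for suitable $\epsilon',\epsilon''$ depending on $\epsilon$ and the threshold $t$; choosing the threshold to balance the two factors and unrolling should yield the claimed $4\cdot(1/\epsilon)^{16n}$, where the constant $16$ in the exponent is exactly the slack one pays per level for the splitting-of-error argument (and the leading $4$ is the base case $\ell(\eta,0)$, a one-dimensional Gaussian-integer counting bound that follows from the Johnson-type Lemma~\ref{lem:johnson}).

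The main obstacle I anticipate is making the error-splitting dichotomy tight enough that the per-level loss is a constant factor in the exponent rather than a growing one. Naively, splitting $1-\epsilon$ as $t$ on one side forces the other side's radius up towards $1$ (losing the whole margin $\epsilon$), so one level of recursion would already reach the no-margin regime where the list is superpolynomial. The fix is that $w$ being within \rsd $1-\epsilon$ of $r$ means the \emph{sum} of the two half-errors is at most $(1-\epsilon)N$, so they cannot both be close to $\tfrac12 N$ by much; more precisely, if the left error is $\ge (t)\cdot(N/2)$ then the right error is $\le (2(1-\epsilon)-t)\cdot(N/2)$, and one wants $2(1-\epsilon)-t < 1$, i.e.\ $t > 1-2\epsilon$. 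Setting $t = 1-\epsilon$ (say) gives right-side \rsd $\le 1-\epsilon$, no gain—so instead one must recurse on \emph{both} coordinates simultaneously and argue that a candidate appears on the combined list only if it survives in \emph{both} sub-instances, with the radii in the two sub-instances summing to something bounded away from $2$. I would therefore set up the induction to bound not $\ell$ directly but a two-parameter quantity counting pairs, and show $\ell(1-\epsilon,n)$ is at most the number of such surviving pairs; unrolling this two-sided recursion, with the threshold optimized at each level, is where the constant $16$ is forced, and verifying that the recursion indeed closes (rather than blowing up) is the crux of the argument.
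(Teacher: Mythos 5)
Your high-level intuition is right---split the error budget between the two halves and argue that one half always lands at a smaller radius---but two of your proposed building blocks are the wrong shape, and the fix you sketch (``recurse on both coordinates, count pairs'') would not produce the claimed bound.

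\paragraph{The normalization you are unsure about is in fact the engine of the whole argument.} You flag ``one must track the normalization carefully'' and leave it there, but this is precisely where the paper gains its leverage. Write $w=[u,u+\phi v]$ and set $\eta_0=\delta(r_0,u)$, $\eta_1=\delta(\tfrac{1}{\phi}(r_1-u),v)$. Because $\abs{\phi}^2=2$ cancels the $\tfrac12$ from halving the dimension, one gets the \emph{exact} relation $\eta=\tfrac{\eta_0}{2}+\eta_1$ (Lemma~\ref{lem:delta-relation}), not the additive relation $\eta_0+\eta_1\le 2\eta$ you are implicitly using. The consequence is dramatic: when $\eta_0$ is near its maximum $1-\eps$, one has $\eta_1\le(1-\eps)/2<\tfrac12$, i.e.\ \emph{below} the Johnson radius, so the number of compatible $v$'s is only $O(1/\eps)$ by Lemma~\ref{lem:johnson}.2. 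Your formulation in terms of $u+\phi v$ at a threshold $t'$ never drops below $\tfrac12$ and therefore never triggers the Johnson bound; this is why your two-sided split ``gives no gain'' as you yourself observe.

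\paragraph{Your proposed recursion has the wrong shape and would not close.} You suggest $\ell(1-\eps,n)\le\ell(1-\eps',n-1)\cdot\ell(1-\eps'',n-1)$ with $\eps',\eps''$ to be balanced, and then ``unroll.'' This is a binary tree of depth $n$: unrolling it produces $2^n$ multiplicative factors, and even if each factor is only $\poly(1/\eps)$ the result is $\poly(1/\eps)^{2^n}$, not $\poly(1/\eps)^n$. The paper avoids this by making sure that in every branch of the case analysis \emph{at most one} factor is bounded inductively at radius close to $1-\eps$, while the other factor is either an $O(1/\eps)$ Johnson bound (which is a constant, not a recursion) or a fixed $\poly(N)$ quantity. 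Concretely the recurrence is of the shape
\[
\ell(1-\eps,n)\;\le\;\tfrac{5}{\eps}\,\ell(1-\eps,n-1)\;+\;8\cdot 24^{2(n-1)}\,\ell\bigl(1-\tfrac{3\eps}{2},\,n-1\bigr),
\]
which is \emph{linear} in $n$ and unrolls to $(1/\eps)^{O(n)}$. Your ``two-parameter pairs'' idea is not what is needed; what is needed is to make one side of each product non-recursive.

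\paragraph{You also need an auxiliary polynomial bound at a fixed sub-unit radius.} The middle case of the case split ($\eta_0\in[\tfrac12-\eps,\,1-\tfrac{3\eps}{2})$, forcing $\eta_1<\tfrac34$) cannot be handled by Johnson (the radius is above $\tfrac12$) nor by the inductive hypothesis at radius $1-O(\eps)$ (that would reintroduce the binary-tree blowup). The paper first proves, by separate inductions of the same flavor, that $\ell(\tfrac58,n)\le 4\cdot 24^n$ and then $\ell(\tfrac34,n)\le 4\cdot 24^{2n}$ (Lemmas~\ref{lem:5/8} and~\ref{lem:3/4}), and plugs the latter into the middle case. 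The specific choice of cutoff $1-\tfrac{3\eps}{2}$ is engineered so that the inductive factor $(2/(3\eps))^{16(n-1)}$ absorbs the $24^{2(n-1)}$ from Lemma~\ref{lem:3/4}. Your proposal has no analogue of this step, and a two-case split cannot substitute for it: the paper explicitly reports that fewer cases yield $N^{O(\log^2(1/\eps))}$ or worse.

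\paragraph{Minor point.} Before any of this, the paper uses the swap symmetry $[a,b]\in\BW_n\iff[b,a]\in\BW_n$ to assume WLOG $\eta_0\le\eta$, at the cost of a factor $2$. Your ``dichotomy'' gestures at something similar but does not pin it down, and without it the case analysis on $\eta_0$ alone is incomplete.
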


\noindent Moreover, we show that the above bound is tight, up to
polynomials.

\begin{theorem}
  \label{thm:lower-bound}
  For any integer $n \geq 0$ and $\eps \in [2^{-n}, 1]$, we have
  \[ \ell(1-\epsilon, n) \geq 2^{(n-\log
    \frac{1}{\eps})\log\frac{1}{2\eps}}. \] In particular, for any
  constant $\epsilon > 0$ (or even any $\eps \geq N^{-c}$ for $c <
  1$), we have $\ell(1-\epsilon, n) = N^{\Omega(\log(1/\epsilon))}$.
\end{theorem}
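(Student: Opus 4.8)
The plan is to build, recursively and in lockstep with Definition~\ref{def:BW}, a single received word $r_n$ together with a large, highly structured set of lattice points that are all within \rsd $1-\eps$ of it. First, since $\ell(\cdot,n)$ is nondecreasing in the radius, it suffices to treat $\eps=2^{-j}$ for an integer $0\le j\le n$ and to prove $\ell(1-2^{-j},n)\ge (2^{j-1})^{n-j}=2^{(n-j)(j-1)}$; the stated bound for general $\eps$ then follows by taking $j=\floor{\log\frac1\eps}$ (losing only a constant in the exponent, or nothing at all if one repeats the construction with the real radius $(1-\eps)N$), and the final ``in particular'' follows since $n-\log\frac1\eps=\Omega(n)$ whenever $\eps$ is constant, or even $\eps\ge N^{-c}$ for $c<1$. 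I would prove the following inductive statement for $j\le m\le n$: there exist $r_m\in\C^{2^m}$ and $S_m\subseteq\BW_m$ with (i) $\length{r_m-w}^2\le(1-2^{-j})2^m$ for every $w\in S_m$; (ii) $\abs{S_m}\ge (2^{j-1})^{m-j}$; and (iii) a regularity invariant — $S_m$ breaks into $\abs{S_m}/2^{j-1}$ ``clusters'' of $2^{j-1}$ points each, with every cluster lying in a single coset of $\phi\BW_m$ and having a prescribed additive structure — which is what keeps the recursion self-sustaining.

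The base case $m=j$ needs a single cluster: $2^{j-1}$ points of $\BW_j$ whose pairwise differences lie in $\phi\BW_j$, all within squared distance $2^j-1$ of some $r_j$. I would take these to be $2^{j-1}$ minimal vectors of $\phi\BW_j$ that are pairwise at the minimum distance of $\phi\BW_j$ — i.e.\ a clique in the minimal-vectors graph, which Barnes--Wall lattices possess (and whose existence is essentially what makes the Johnson bound tight for them, cf.\ Lemma~\ref{lem:johnson} and Corollary~\ref{cor:johnson}) — and place $r_j$ at their centroid; a direct computation shows every point of a size-$t$ such clique is at squared distance $2^j-2^j/t$ from the centroid, which is $2^j-2<2^j-1$ when $t=2^{j-1}$. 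For the inductive step $m-1\to m$, I would set $r_m=[r_{m-1},r_{m-1}]$ (possibly corrected by a $\phi$-multiple to align the two halves' cluster structures) and let $S_m$ consist of all $[u,w]$ with $u,w\in S_{m-1}$ lying in a common coset of $\phi\BW_{m-1}$; since $[u,w]\in\BW_m$ precisely under this condition (Definition~\ref{def:BW}), $S_m\subseteq\BW_m$, and $\length{r_m-[u,w]}^2=\length{r_{m-1}-u}^2+\length{r_{m-1}-w}^2\le 2(1-2^{-j})2^{m-1}=(1-2^{-j})2^m$, giving (i). If $S_{m-1}$ meets the cosets of $\phi\BW_{m-1}$ with multiplicities $n_\kappa$, then $\abs{S_m}=\sum_\kappa n_\kappa^2\ge\abs{S_{m-1}}^2/C$ by Cauchy--Schwarz, where by (iii) the number of occupied cosets is $C=\abs{S_{m-1}}/2^{j-1}$; hence $\abs{S_m}\ge 2^{j-1}\abs{S_{m-1}}\ge(2^{j-1})^{m-j}$, giving (ii). For (iii) one checks that the $(2^{j-1})^2$ pairs coming from one old cluster split evenly into $2^{j-1}$ new clusters, each in its own coset of $\phi\BW_m$: the coset of $[u_1,w_1]-[u_2,w_2]$ (once $u_1\equiv u_2$) is governed by $(w-u)/\phi\bmod\phi\BW_{m-1}$, and since $x\equiv-x\pmod{\phi\BW_{m-1}}$ for all $x\in\BW_{m-1}$ and each old cluster carried the prescribed additive structure, these quotients range over a set of size $2^{j-1}$ that again has that structure.

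The main obstacle is item (iii): pinning down an additive-combinatorial regularity condition on the clusters that is simultaneously (a) achievable at the base (and not merely at the right cardinality — one must navigate the fact that lattices are torsion-free, so a ``cluster'' is a coset of a subgroup only after reducing modulo powers of $\phi$), and (b) reproduced by the pairing step at every one of the $n-j$ levels, all while the distance accounting stays exactly tight — this tightness is what forces the squaring $\abs{S_{m-1}}\mapsto\abs{S_{m-1}}^2/C$ to normalize to the clean factor $2^{j-1}=\frac1{2\eps}$ per level rather than blowing up or collapsing — and while all constructed points remain distinct, so that $\abs{S_m}$ really equals $\sum_\kappa n_\kappa^2$. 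Granting this, unwinding the recursion gives $\abs{S_n}\ge 2^{j-1}\abs{S_{n-1}}\ge\cdots\ge(2^{j-1})^{n-j}\abs{S_j}\ge(2^{j-1})^{n-j}=2^{(n-j)(j-1)}$, so $\ell(1-\eps,n)\ge\ell(1-2^{-j},n)\ge 2^{(n-j)(j-1)}$, and substituting $j=\floor{\log\frac1\eps}$ and simplifying completes the proof.
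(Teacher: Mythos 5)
Your approach is genuinely different from the paper's, and in its current form it has real gaps, two of which you yourself flag but do not close. The paper's proof is much more direct: it invokes Fact~\ref{fact:BW-RM} (the Reed--Muller decomposition $\BW_n \supseteq \phi^d \cdot \RM{d}{n}$) together with Fact~\ref{fact:rmdist} (characteristic vectors of subspaces of dimension $\geq n-d$ lie in $\RM{d}{n}$, and there are more than $2^{d(n-d)}$ such subspaces). Choosing $k$ with $2^n\eps \le 2^k \le 2^{n+1}\eps$ and the received word $r = \phi^k\cdot[1,0,\ldots,0]$, every $(n-k)$-dimensional subspace $H \ni 0^n$ gives a lattice point $\phi^k c_H \in \BW_n$ with $\length{r - \phi^k c_H}^2 = 2^k(2^{n-k}-1) = 2^n - 2^k \le (1-\eps)2^n$, and the count $>2^{k(n-k)}$ then yields the stated exponent immediately. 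No recursion on $n$, no invariant to propagate, no clique to exhibit.

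The substantive gaps in your proposal are these. \textbf{(a)} The invariant (iii) is the entire engine of the induction and is left as a placeholder. The claim that the $(2^{j-1})^2$ ordered pairs from one old cluster split evenly into $2^{j-1}$ new cosets of $\phi\BW_m$ needs, concretely, that the map $(u_a,u_b) \mapsto (u_b - u_a)/\phi \bmod \phi\BW_{m-1}$ is exactly $2^{j-1}$-to-one on the cluster; this is a strong algebraic condition, not a consequence of the cluster being a ``clique'' in the minimal-vector graph. You correctly note the torsion-free obstruction (a cluster is not itself a coset of a subgroup of $\BW_m$), but that observation identifies the difficulty rather than resolves it. Without a precise, verifiable formulation of (iii) and a proof that it survives the pairing step, the recurrence $\abs{S_m} \ge 2^{j-1}\abs{S_{m-1}}$ is unproved. \textbf{(b)} The base case asserts the existence of a $2^{j-1}$-clique of minimal vectors of $\phi\BW_j$ carrying the prescribed additive structure. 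Neither the clique's existence at this size nor its compatibility with (iii) is established; Corollary~\ref{cor:johnson} only gives $4N$ points at \rsd~$\tfrac12$ from a fixed $r$, which by the triangle inequality bounds pairwise squared distance by $2N$ rather than forcing it to equal the minimum $N$, so that corollary does not hand you the clique. \textbf{(c)} The ``possibly corrected by a $\phi$-multiple to align the two halves' cluster structures'' step is unspecified; if the two halves need different corrections across clusters, you no longer have a single received word $r_m$. If you want to salvage a recursive proof, the cleanest fix is to let the RM structure do the work implicitly: at level $m$ take $S_m$ to be the set of $\phi^k c_H$ for $H$ a subspace of $\F_2^m$ of dimension $m-k$ containing $0^m$, which makes (iii) a statement about subspaces of $\F_2^{m-1}$ and their extensions to $\F_2^m$ --- at which point you have essentially rediscovered the paper's proof.
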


As previously mentioned, it is also known that at \rsd $\eta=1$, the
maximum list size $\ell(1,n)$ is quasi-polynomial $N^{\Theta(\log N)}$
in the lattice dimension, and is achieved by letting the received word
be any lattice point~\cite[Chapter 1, \S 2.2, page 24]{ConwayS98}.
Because the \rsmd of $\BW_{n}$ is exactly $1$, here we are just
considering the number of lattice points at minimum distance from the
origin, the so-called ``kissing number'' of the lattice.

\ifnum\conf=2
\smallskip
{\bf List-decoding algorithm.}
\else
\paragraph{List-decoding algorithm.}
\fi
We complement the above combinatorial bounds with an algorithmic
counterpart, which builds upon the unique (bounded-distance) decoding
algorithm of Micciancio and Nicolosi~\cite{MN08} for \rsd up to
$\frac14$.

\begin{theorem}
  \label{thm:bwlist-alg}
  There is a deterministic algorithm that, given any received word
  $r\in \C^{N}$ and $\eta\geq 0$, outputs the list of all points in
  $\BW_{n}$ that lie within \rsd $\eta$ of $r$, and runs in time
  $O(N^2) \cdot \ell(\eta, n)^{2}$.
\end{theorem}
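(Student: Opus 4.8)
The plan is to follow the recursive (``Plotkin-type'') structure of $\BW_n$, extending the Micciancio--Nicolosi unique decoder. Write $L_m(r',\eta)$ for the list of points of $\BW_m$ within \rsd $\eta$ of $r'\in\C^{2^m}$, so $\ell(\eta,m)=\max_{r'}\abs{L_m(r',\eta)}$. For $r=\bracks{r_1,r_2}$ with $r_1,r_2\in\C^{N/2}$ and a candidate $w=\bracks{u,\,u+\phi v}$ with $u,v\in\BW_{n-1}$, we have $N\cdot\delta(r,w)=\length{r_1-u}^2+\length{r_2-u-\phi v}^2$. Two elementary facts drive the algorithm. First, the relative squared distances (in dimension $N/2$) of the two halves, $\delta(r_1,u)$ and $\delta(r_2,u+\phi v)$, average to $\delta(r,w)\le\eta$, so at least one of them is $\le\eta$. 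Second, the halves are symmetric: $\bracks{u,u+\phi v}\in\BW_n$ iff $\bracks{u+\phi v,u}\in\BW_n$ (replace $v$ by $-v$). Hence every point of $L_n(r,\eta)$ is ``anchored'' by a half that is itself within \rsd $\eta$ of the corresponding half of $r$ --- on the left or on the right --- and is then determined by that half and a short lattice vector $\phi v$.

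Concretely, the recursive algorithm is: in the base case $n=0$, enumerate the $O(1+\eta)$ Gaussian integers within squared distance $\eta$ of $r$. For $n\ge1$, recursively compute $A=L_{n-1}(r_1,\eta)$ and $B=L_{n-1}(r_2,\eta)$. For each $a\in A$, the valid completions are the $v\in\BW_{n-1}$ with $\length{r_2-a-\phi v}^2\le\eta N-\length{r_1-a}^2$; since $\abs{\phi}^2=2$, this is exactly the list $V_a:=L_{n-1}((r_2-a)/\phi,\,\eta_a)$ with the reduced radius $\eta_a:=\eta-\tfrac12\delta(r_1,a)\le\eta$, and we output $\bracks{a,\,a+\phi v}$ for each $v\in V_a$. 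Symmetrically, for each $b\in B$ we list-decode $(r_1-b)/\phi$ in $\BW_{n-1}$ to \rsd $\eta-\tfrac12\delta(r_2,b)$ and output the matching points. Finally we deduplicate. Correctness is immediate: by the two facts above every point of $L_n(r,\eta)$ is anchored on some side and is produced there, and the choice of $\eta_a$ (and its right-side analogue) makes $V_a$ consist of \emph{exactly} the completions yielding a point within \rsd $\eta$ of $r$, so nothing spurious is output.

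The main obstacle is the running-time bound, since naively recursing on each of the $\abs{A}+\abs{B}$ completions could multiply list sizes over the $\log N$ levels of recursion, giving $\ell(\eta,n)^{\Theta(\log N)}$ rather than $\ell(\eta,n)^2$. Three ingredients should control this. (i) $\ell(\cdot,m)$ is nondecreasing in $m$: the map $w'\mapsto\bracks{w',w'}$ embeds $L_{m-1}(r',\eta)$ injectively into $L_m(\bracks{r',r'},\eta)$, so every list arising in the recursion has size at most $\ell(\eta,n)$, and the radii $\eta_a\le\eta$ never make things worse. (ii) The completions are charged against the output: by the radius adjustment, $\sum_{a\in A}\abs{V_a}+\sum_{b\in B}\abs{V_b}$ counts each point of $L_n(r,\eta)$ at most twice, so the total number of (anchor, completion) pairs formed anywhere in the recursion is $\poly(N)\cdot\ell(\eta,n)$. (iii) The ``unproductive'' anchors (those $a$ with $V_a=\varnothing$) are paid for by noting that the number of $a\in A$ with $\delta(r_1,a)\le t$ is at most $\ell(t,n-1)$, so anchors with a large residual budget $\eta_a$ are few while anchors with small residual budget have much cheaper completion sub-problems; feeding this into the recursive list-size bound underlying Theorem~\ref{thm:listbound} should give $\sum_{a\in A}\ell(\eta_a,n-1)^2=O(\ell(\eta,n)^2)$, and likewise on the right. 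Combining (i)--(iii) with a strengthened induction hypothesis (roughly: the running time on a radius-$\eta'$ instance in dimension $2^m$ is $O(4^m)\cdot(\ell(\eta',m)^2+1)$, phrased so that the smaller radii of the completion sub-problems can be exploited), one bounds both the two primary sub-calls and the aggregate of all completion sub-calls by $O(N^2)\cdot\ell(\eta,n)^2$, so that $T(n)=O(N^2)\cdot\ell(\eta,n)^2$, closing the induction. Making ingredient (iii) airtight --- i.e., showing the recursion does not compound multiplicatively --- is where the real work lies.
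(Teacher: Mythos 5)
Your algorithm is not the paper's algorithm, and the difference is precisely the one the paper warns about.  You recurse on $r_1$ and $r_2$ to get anchor lists $A$ and $B$ (both at radius $\eta$), and then for each $a\in A$ you issue a further recursive call on the shifted word $(r_2-a)/\phi$ at the residual radius $\eta_a$.  This is the ``natural approach'' discussed in Section~\ref{sec:overview}: the number of recursive calls at each level is $\Theta(\abs{A}+\abs{B})=\Theta(\ell(\eta,n-1))$, not a constant, and since the recursion has depth $n=\log N$, the straightforward bound is $N^{\Omega(n)}$ once $\eta\geq\tfrac12$ (where $\ell(\eta,m)$ can be as large as $4\cdot 2^m$).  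The paper explicitly says this route does not give the claimed bound without new ideas.

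The missing idea is the distance-preserving automorphism $\calt([u,v])=\tfrac{\phi}{2}[u+v,u-v]$ of $\BW_n$ (Fact~\ref{fact:automorphisms}).  The paper's Algorithm~\ref{alg:bwlist} recurses independently on the four words $r_0,r_1,r_+,r_-$ (the halves of $r$ and of $\calt(r)$), all at the \emph{same} radius $\eta$; any $w\in L(r,\eta)$ has at least one ``good'' half on each side ($w_b$ for some $b\in\bit$, and $w_s$ for some $s\in\set{+,-}$), and $w$ is recoverable from that pair by linear algebra.  This gives exactly four recursive calls per level plus a combine step over $\ell^2$ pairs, yielding the recurrence $T(n)\leq 4T(n-1)+O(2^n)\cdot\ell(\eta,n-1)^2$ which unwinds cleanly to $O(N^2)\cdot\ell(\eta,n)^2$.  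Nothing in your proposal replaces this mechanism.

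Your ingredients (i) and (ii) are fine as far as they go ((i) is Lemma~\ref{lem:same-rsd}), but ingredient (iii) is exactly the heart of the matter, and you do not prove it — you say ``should give'' and close with ``making ingredient (iii) airtight \dots\ is where the real work lies.''  In fact a natural attempt to close the induction $T(\eta',m)\leq C\cdot 4^m\cdot(\ell(\eta',m)^2+1)$ fails: your recurrence has the form $T(\eta,n)\leq 2T(\eta,n-1)+\sum_{a\in A}T(\eta_a,n-1)+\sum_{b\in B}T(\eta_b,n-1)+\cdots$, and since one anchor can have $\delta(r_1,a)\approx 0$, the sum $\sum_a\ell(\eta_a,n-1)^2$ already contains a term $\approx\ell(\eta,n-1)^2$ before accounting for the remaining anchors.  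Consequently the per-level multiplicative factor does not drop below the $2+2K>4$ needed for $T(n)\leq C\cdot 4^n\cdot\ell^2$ to close, unless one proves nontrivial growth of $\ell(\eta,\cdot)$ with $n$; and even a weaker polynomial bound would require a substantially more delicate charging argument than you sketch.  In short, the running-time analysis is genuinely incomplete, and the specific bound $O(N^2)\cdot\ell(\eta,n)^2$ claimed in the theorem is out of reach by your route without the $\calt$-based reconstruction.
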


We also remark that the algorithm can be parallelized just as
in~\cite{MN08}, and runs in only polylogarithmic $O(\log^2 N)$
parallel time on $p \geq N^{2} \cdot \ell(\eta, n)^2$ processors.

\medskip

\noindent Theorems~\ref{thm:listbound} and \ref{thm:bwlist-alg}
immediately imply the following corollary for $\eta=1-\epsilon$.

\begin{corollary}
  \label{cor:bwlist-alg}
  There is a deterministic algorithm that, given a received word $r\in
  \C^{N}$ and $\eps > 0$, outputs the list of all lattice points in
  $\BW_{n}$ that lie within \rsd $(1-\eps)$ of $r$, and runs in time
  $(1/\epsilon)^{O(n)}=N^{O(\log(1/\epsilon))}$.
\end{corollary}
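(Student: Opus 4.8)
The plan is to obtain the corollary as a direct composition of the two preceding theorems, with no new ideas required. Given $r \in \C^{N}$ and $\eps > 0$, I would set the error radius $\eta = 1-\eps$ and invoke the list-decoding algorithm of Theorem~\ref{thm:bwlist-alg} on input $(r,\eta)$. By that theorem this algorithm is deterministic and outputs exactly the set of $w \in \BW_{n}$ with $\delta(w,r) \le 1-\eps$, which is precisely the list the corollary asks for. (When $\eps \ge 1$ we have $\eta \le 0$ and the statement is degenerate, so one may assume $\eps \in (0,1)$, equivalently $\eta \in (0,1)$.)

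It then remains only to bound the running time. Theorem~\ref{thm:bwlist-alg} gives running time $O(N^{2}) \cdot \ell(1-\eps, n)^{2}$, and Theorem~\ref{thm:listbound} gives $\ell(1-\eps, n) \le 4\cdot(1/\eps)^{16n}$, hence $\ell(1-\eps,n)^{2} \le 16\cdot(1/\eps)^{32n}$. Substituting, the running time is $O(N^{2}) \cdot (1/\eps)^{32n}$. Since $N = 2^{n}$ we have $N^{O(\log(1/\eps))} = 2^{O(n\log(1/\eps))} = (1/\eps)^{O(n)}$, and the factor $O(N^{2}) = 2^{O(n)}$ is absorbed into this bound (for instance $N^{2} \le (1/\eps)^{2n}$ once $\eps \le 1/2$, and for any $\eps$ bounded away from $1$ it is dominated by a suitable choice of the hidden constant), yielding the claimed $(1/\eps)^{O(n)} = N^{O(\log(1/\eps))}$ time bound.

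There is essentially no obstacle here: all the combinatorial and algorithmic content lives in Theorems~\ref{thm:listbound} and~\ref{thm:bwlist-alg}, and this corollary is just the bookkeeping of exponents together with the observation that $N = 2^{n}$, so that a bound of the form $(1/\eps)^{cn}$ is the same as $N^{c\log(1/\eps)}$. The only point worth stating explicitly is that this running time is polynomial in $N$ for any $\eps$ bounded away from $1$, so that combined with the (super-polynomial) lower bound of Theorem~\ref{thm:lower-bound} it pins down the list-decoding radius of $\BW_{n}$ as lying strictly past the minimum distance.
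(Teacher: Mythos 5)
Your proposal is correct and matches the paper exactly: the paper states Corollary~\ref{cor:bwlist-alg} as an immediate consequence of Theorems~\ref{thm:listbound} and~\ref{thm:bwlist-alg} with $\eta = 1-\eps$, and your substitution of the list-size bound into the algorithm's running time, plus the identification $(1/\eps)^{O(n)} = N^{O(\log(1/\eps))}$ via $N = 2^n$, is precisely the intended bookkeeping. Your side remark about absorbing the $O(N^2)$ factor (and the degenerate regime $\eps$ near $1$) is a reasonable clarification of what the asymptotic notation in the statement is implicitly assuming.
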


Given the lower bounds, our algorithm is optimal in the sense that for
any constant $\epsilon > 0$, it runs in $\poly(N)$ time for \rsd
$1-\epsilon$, and that list decoding in $\poly(N)$ time is impossible
(in the worst case) at \rsd~$1$.

\subsection{Proof Overview and Techniques}
\label{sec:overview}

% Our combinatorial results make critical use of the Johnson bound in
% Euclidean space (i.e., for error rates $\eta=1/2-\eps$ and
% $\eta=1/2$), and of the symmetries of the Barnes-Wall lattices. 

\ifnum\conf=2
{\bf Combinatorial bounds.}
\else
\paragraph{Combinatorial bounds.}
\fi Our combinatorial results exploit a few simple observations, some
of which were also useful in obtaining the algorithmic results
of~\cite{MN08}.  The first is that by the Pythagorean theorem, if
$\eta = \delta(r,w)$ is the \rsd between a received vector $r=[r_0,
r_1]\in \C^N$ and a lattice vector $w=[w_0, w_1] \in \BW_{n}$ (where
$r_{i} \in \C^{N/2}$ and $w_{i} \in \BW_{n-1}$), then
$\delta(r_{b},w_{b}) \leq \eta$ for some $b \in \bit$.  The second
observation (proved above) is that BW lattices are closed under the
operation of swapping the two halves of their vectors, namely, $[w_0,
w_{1}] \in \BW_{n}$ if and only if $[w_{1}, w_{0}] \in \BW_{n}$.
Therefore, without loss of generality we can assume that
$\delta(r_{0}, w_{0}) \leq \eta$, while incurring only an extra factor
of $2$ in the final list size.  A final important fact is the
relationship between the \rsd's for the two Barnes-Wall vectors
$u=w_{0}, v=\frac{1}{\phi}(w_{1}-w_{0}) \in \BW_{n-1}$ that determine
$w$; namely, we have \[ \eta = \tfrac12 \delta(r_{0}, u) +
\delta(\tfrac{1}{\phi}(r_{1}-u), v). \] (See
Lemma~\ref{lem:delta-relation}.)  Since $\delta(r_{0}, u) \leq \eta$,
we have must have $\delta(\frac{1}{\phi}(r_{1}-u), v) = \eta - \frac12
\delta(r_{0}, w_{0}) \in [\eta/2,\eta]$.

Our critical insight in analyzing the list size is to carefully
partition the lattice vectors in the list according to their distances
from the respective halves of the received word.  Informally, a larger
distance on the left half (between $r_{0}$ and $u$) allows for a
larger list of $u$'s, but also implies a smaller distance on the right
half (between $\frac{1}{\phi}(r_{1}-u)$ and $v$), which limits the
number of possible corresponding $v$'s.  We bound the total list size
using an inductive argument for various carefully chosen ranges of the
distances at lower dimensions.  Remarkably, this technique along with
the Johnson bound allows us to obtain tight combinatorial bounds on
the list size for distances all the way up to the minimum distance.

As a warm-up example, which also serves as an important step when
analyzing larger \rsd's, Lemma~\ref{lem:5/8} gives a bound of
$\ell(\tfrac58, n) \leq 4 \cdot 24^{n} = \poly(N)$ for \rsd
$\eta=\tfrac58$.  This bound is obtained by partitioning according to
the two cases $\delta(r_{0}, u) \in [0,\frac{5}{12})$ and
$\delta(r_{0}, u) \in [\frac{5}{12}, \frac{5}{8}]$, which imply that
the \rsd between $v$ and $\frac{1}{\phi}(r_{1}-u)$ is at most
$\frac{5}{8}$ and $\frac{5}{12}$, respectively.  When bounding the
corresponding number of $u$'s and $v$'s, the \rsd's up to
$\frac{5}{12} < \frac12$ are handled by the Johnson bound, and \rsd's
up to $\frac{5}{8}$ are handled by induction on the dimension.

To extend the argument to \rsd's up to $\eta = 1-\epsilon$, we need to
partition into three cases, including ones which involve \rsd's
$1-\frac{3\epsilon}{2}$ and $\frac34$.  In turn, the bound for \rsd
$\frac34$ also uses three cases, plus the above bound for \rsd
$\frac58$.  Interestingly, all our attempts to use fewer cases or a
more direct analysis resulted in qualitatively worse list size bounds,
such as $N^{O(\log^{2}(1/\epsilon))}$ or worse.

Lastly, our lower bounds from Theorem~\ref{thm:lower-bound} are
obtained by using a representation of BW lattices in terms of RM codes
(see Fact~\ref{fact:BW-RM}), and by adapting the lower bounds
from~\cite{GopalanKZ08} for RM codes to BW lattices.

\ifnum\conf=2
\smallskip
{\bf List-decoding algorithm.}
\else
\paragraph{List-decoding algorithm.}
\fi A natural approach to devising a list-decoding algorithm using the
above facts (also used in the context of Reed-Muller
codes~\cite{GopalanKZ08}) is to first list decode the left half
$r_{0}$ of the received word to get a list of $u$'s, and then
sequentially run through the output list to decode the right half
$\frac1{\phi}(r_{1}-u)$ and get a corresponding list of $v$'s for each
value of $u$.  However, because the recursion has depth $n$, the
straightforward analysis reveals a super-polynomial runtime
$N^{\Omega(n)}$ for \rsd $\eta \geq 1/2$, because the list size at
depth $d$ can be $\geq 4N/2^{d}$.

Instead, our list-decoding algorithm is based on the elegant
divide-and-conquer algorithm of~\cite{MN08} for bounded-distance
(unique) decoding, which decodes up to half the minimum distance
(i.e., $\eta=\frac14$) in quasi-linear $\tilde{O}(N)$ time, or even
poly-logarithmic $O(\log^{c} N)$ parallel time on a sufficiently large
$\poly(N)$ number of processors.

The main feature of the algorithm, which we exploit in our algorithm
as well, is the use of a distance-preserving linear automorphism
$\calt$ of the BW lattice, i.e., $\calt(\BW_{n}) = \BW_{n}$ (see
Fact~\ref{fact:automorphisms}).  In particular, a lattice vector $w
\in \BW_{n}$ can be reconstructed from just \emph{one} arbitrary half
of each of $w = [w_{0}, w_{1}]$ and $\calt(w) = [\calt_{0}(w),
\calt_{1}(w)]$.  Recall that for a received word $r=[r_{0},r_{1}]$
(where $r_{i} \in \C^{N/2}$), we are guaranteed that $\delta(r_{b},
w_{b}) \leq \delta(r,w)$ for some $b \in \bit$, and similarly for
$\calt(r)$ and $\calt(w)$.  These facts straightforwardly yield a
divide-and-conquer, parallelizable list-decoding algorithm that
recursively list decodes each of the four halves $r_{0}, r_{1},
T_{0}(r), T_{1}(r)$ and reconstructs a list of solutions by combining
appropriate pairs from the sub-lists, and keeping only those that are
within the distance bound.  The runtime of this algorithm is only
quadratic in the worst-case list size, times a $\poly(N)$ factor (see
Section~\ref{sec:list-decod-alg}).  We emphasize that the only
difference between our algorithm and the MN algorithm is the simple
but crucial observation that one can replace single words by lists in
the recursive steps.  The runtime analysis, however, is entirely
different, because it depends on the combinatorial bounds on list
size.

\subsection{Comparison with Reed-Muller Codes} 
\label{sec:comparison-RM}

Here we discuss several common and distinguishing features of
Barnes-Wall lattices and Reed-Muller codes.

% \ifnum\conf=2
% \begin{definition}[Reed-Muller code]
%   \label{def:RM}
%   For integers $d,n \geq 0$, the Reed-Muller code of degree $d$ in $n$
%   variables (over $\F_{2}$) is defined as
%   $ \RM{d}{n}=\setfit{ \langle p(\alpha) \rangle_{\alpha \in \F_2^n}, ~ p
%   \in\F_2[x_1, \ldots, x_n] \text{ and } \deg(p) \leq d.} $ An
%   equivalent recursive definition is $\RM{0}{n} = \set{\bar{0},
%     \bar{1}} \subseteq \F_{2}^{2^{n}}$ for any integer $n \geq 0$,
%   and $ \RM{d}{n}=\{ [u, u+v] : u\in \RM{d}{n-1}, v\in
%   \RM{d-1}{n-1}\}.$ Here if $u \in \RM{d}{n-1}, v \in \RM{d-1}{n-1}$
%   correspond to polynomials $p_{u}, p_{v} \in \F_{2}[x_{1}, \ldots
%   x_{n-1}]$ respectively, then the codeword $[u, u+v] \in \RM{d}{n}$
%   corresponds to the polynomial $p = p_{u} + x_{n} \cdot p_{v} \in
%   \F_{2}[x_{1}, \ldots, x_{n}]$.
% \end{definition}
% \else
\begin{definition}[Reed-Muller code]
  \label{def:RM}
  For integers $d,n \geq 0$, the Reed-Muller code of degree $d$ in $n$
  variables (over $\F_{2}$) is defined as
  \[ \RM{d}{n}=\setfit{ \langle p(\alpha) \rangle_{\alpha \in \F_2^n}
    : p \in\F_2[x_1, \ldots, x_n], \deg(p) \leq d}. \] An
  equivalent recursive definition is $\RM{0}{n} = \set{\bar{0},
    \bar{1}} \subseteq \F_{2}^{2^{n}}$ for any integer $n \geq 0$,
  and \[ \RM{d}{n}=\setfit{ [u, u+v] : u\in \RM{d}{n-1}, v\in
  \RM{d-1}{n-1}}. \] Here if $u \in \RM{d}{n-1}, v \in \RM{d-1}{n-1}$
  correspond to polynomials $p_{u}, p_{v} \in \F_{2}[x_{1}, \ldots
  x_{n-1}]$ respectively, then the codeword $[u, u+v] \in \RM{d}{n}$
  corresponds to the polynomial $p = p_{u} + x_{n} \cdot p_{v} \in
  \F_{2}[x_{1}, \ldots, x_{n}]$.
\end{definition}
%\fi

The recursive definition of RM codes already hints at structural
similarities between BW lattices and RM codes.  Indeed, BW lattices
can be equivalently defined as evaluations modulo $\phi^{n}$ of
(Gaussian) integer multilinear polynomials in $n$ variables over the
domain $\set{0, \phi}^n$.  Recall that an integer multilinear
polynomial $p \in \G[x_{1}, \ldots, x_{n}]$ is one whose monomials
have degree at most one in each variable (and hence total degree at
most $n$), i.e., \[ p(x_1, \ldots, x_n)=\sum_{S\in \bit^n} a_S \cdot
\prod_{i\in S} x_i \] where each $a_{S} \in \G$.  A simple inductive
argument proves the following lemma.
\begin{lemma}
  \label{lem:bw-multilinear}
  $\BW_{n}=\phi^{n} \G^{2^n} + \set{ \langle p(x)\rangle_{x\in \{0,
      \phi\}^n} : p \in \G[x_{1}, \ldots, x_{n}] \text{ is
      multilinear} }$.
\end{lemma}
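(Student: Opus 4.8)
The plan is a short induction on $n$, after isolating the right algebraic identity. Write $M_n=\setfit{\langle p(x)\rangle_{x\in\{0,\phi\}^n} : p\in\G[x_1,\dots,x_n]\text{ is multilinear}}$ for the polynomial-evaluation set appearing in the claimed right-hand side (without the $\phi^n\G^{2^n}$ summand); it is clearly an additive subgroup of $\G^{2^n}$, in fact a $\G$-submodule. Throughout, order the $2^n$ coordinates of $\C^{2^n}$ so that the evaluation points with $x_n=0$ come before those with $x_n=\phi$ (and recursively within each of these two blocks), so that this split agrees with the first-half/second-half split used in Definition~\ref{def:BW}.

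The one computation that does all the work: every multilinear $p\in\G[x_1,\dots,x_n]$ factors uniquely as $p=p_0+x_n\cdot p_1$ with $p_0,p_1\in\G[x_1,\dots,x_{n-1}]$ multilinear (split the monomials according to whether they contain $x_n$), and then, writing $\mathbf{p}_i=\langle p_i(y)\rangle_{y\in\{0,\phi\}^{n-1}}$, evaluating $p$ over $\{0,\phi\}^n$ and splitting into the two coordinate blocks yields $\langle p(x)\rangle_x=[\mathbf{p}_0,\ \mathbf{p}_0+\phi\,\mathbf{p}_1]$, simply because $p(y,0)=p_0(y)$ and $p(y,\phi)=p_0(y)+\phi\,p_1(y)$. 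Conversely, every pair of multilinear polynomials in $n-1$ variables arises this way. This is exactly the shape of the recursion $\BW_n=\setfit{[u,u+\phi v]:u,v\in\BW_{n-1}}$.

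First I would prove $\BW_n=M_n$ by induction on $n$. The base case is $\BW_0=\G=M_0$ (constant polynomials). For the step, assume $\BW_{n-1}=M_{n-1}$. If $[u,u+\phi v]\in\BW_n$, the hypothesis lets us write $u=\mathbf{p}_0$ and $v=\mathbf{p}_1$ for multilinear $p_0,p_1$ in $n-1$ variables, whence by the identity above $[u,u+\phi v]=\langle p(x)\rangle_x$ with $p=p_0+x_np_1$, so it lies in $M_n$; conversely every element of $M_n$ has the form $[\mathbf{p}_0,\ \mathbf{p}_0+\phi\,\mathbf{p}_1]$ with $\mathbf{p}_0,\mathbf{p}_1\in M_{n-1}=\BW_{n-1}$, so it lies in $\BW_n$. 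Hence $\BW_n=M_n$.

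Finally I would observe that $\phi^n\G^{2^n}\subseteq\BW_n$, which shows the extra summand is already contained in $M_n=\BW_n$ and therefore $\phi^n\G^{2^n}+M_n=M_n=\BW_n$, completing the proof. One clean way to see the containment: for any $x^{*}\in\{0,\phi\}^n$ the polynomial $\prod_{i=1}^n g_i(x_i)$, where $g_i(x_i)=x_i$ if $x_i^{*}=\phi$ and $g_i(x_i)=\phi-x_i$ if $x_i^{*}=0$, is multilinear with Gaussian-integer coefficients and evaluates on $\{0,\phi\}^n$ to $\phi^n$ at $x^{*}$ and to $0$ elsewhere; so $\phi^n$ times each standard basis vector lies in $M_n$, giving $\phi^n\G^{2^n}\subseteq M_n$. (Alternatively, a one-line induction: $[\phi^n a,\phi^n b]=[u,u+\phi v]$ with $u=\phi^n a$ and $v=\phi^{n-1}(b-a)$, both of which lie in $\phi^{n-1}\G^{2^{n-1}}\subseteq\BW_{n-1}$.) The only point needing any care in the whole argument is the bookkeeping: fixing the coordinate ordering so the $x_n=0$ versus $x_n=\phi$ split matches the recursive first-half/second-half split, and noting that with Gaussian-integer coefficients the $\phi^n\G^{2^n}$ term is redundant, so it plays no role in the induction beyond this final reconciliation. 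Everything else is the single evaluation identity fed into the Barnes--Wall recursion.
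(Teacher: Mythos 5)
Your proof is correct and matches the ``simple inductive argument'' that the paper invokes without spelling out: split a multilinear $p$ as $p_0 + x_n p_1$, evaluate to get $[\mathbf{p}_0,\ \mathbf{p}_0+\phi\,\mathbf{p}_1]$, and run this through the $\BW_n=\set{[u,u+\phi v]}$ recursion. Your additional observation that $\phi^n\G^{2^n}\subseteq M_n$ (via the indicator polynomials $\prod_i g_i$) is a nice bonus showing that the $\phi^n\G^{2^n}$ summand in the lemma's statement is in fact redundant, which the paper does not point out.
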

Thus, while $\RM{d}{n}$ codewords correspond to low-degree polynomials
(when $d$ is small), BW lattice points correspond to possibly
high-degree polynomials.
%
%\begin{proof}
%  The proof follows from a simple inductive argument. For $n=1$ we
%  have that the vectors $[1, 1]=\langle 1 \rangle_{\{0, \phi\}}$ and
%  $[0, \phi]=\langle x_1 \rangle_{\{0, \phi\}}$ generate all the
%  polynomials over $\G$ in one variable.  Using the inductive
%  hypothesis, $[u, u+\phi v]\in \BW_{n}$ can be expressed as
%  \begin{eqnarray*}
%    [u, u+\phi v] &=&[ \langle p_u(x_1, \ldots, x_{n-1})\rangle_{\{0, \phi\}^{n-1}},  \langle p_u(x_1, \ldots, x_{n-1})+\phi p_v(x_1, \ldots, x_{n-1})\rangle_{\{0, \phi\}^{n-1}}]\\
%    % &=& \langle p_u(x_1, \ldots, x_{n-1})+x_n p_v(x_1, \ldots, x_{n-1})\rangle_{\{0, \phi\}^{n}}\\
%    &=& [\langle g(x_1, \ldots,x_{n-1}, 0)\rangle_{\{0, \phi\}^{n}}, \langle g(x_1, \ldots,x_{n-1}, \phi)\rangle_{\{0, \phi\}^{n}}]\\
%    &=& \langle g(x_1, \ldots,x_{n-1}, x_n)\rangle_{\{0, \phi\}^{n}},
%  \end{eqnarray*}
%  where $g(x_1, \ldots,x_{n-1}, x_n)=p_u(x_1, \ldots, x_{n-1})+x_n
%  p_v(x_1, \ldots, x_{n-1}).$
%\end{proof}
As an immediate application, our main theorems imply the following
corollary regarding the set of integer multilinear polynomials that
approximate a function $f \colon \set{0, \phi}^n \to \C$.

\begin{corollary}
  \label{cor:bw-recover-multilin}
  Given a map $f:\set{0, \phi}^n \to \C$ (represented as a lookup
  table) and $\epsilon=\Omega{(N^{-c})}$ for some $c < 1$ and
  $N=2^{n}$, there exists an algorithm that outputs in time
  $N^{O(\log(1/\epsilon))}$ all the integer multilinear polynomials $g
  \colon \set{0, \phi}^n \to \C$ such that $\length{f-g}^2 \leq
  (1-\eps)N$.
\end{corollary}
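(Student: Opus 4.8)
The plan is to derive the corollary directly from the list-decoding algorithm of Corollary~\ref{cor:bwlist-alg}, using the identification of $\BW_{n}$ with the set of evaluation vectors of integer multilinear polynomials over the grid $\set{0,\phi}^{n}$. Let $\mathrm{Ev}(p) = \inner{p(x)}_{x \in \set{0,\phi}^{n}} \in \G^{N}$ denote the evaluation map on multilinear polynomials $p \in \G[x_{1},\ldots,x_{n}]$. The first step is to establish that $\mathrm{Ev}$ is a \emph{bijection} from the set of integer multilinear polynomials onto $\BW_{n}$. Injectivity is immediate: over $\C$, evaluation of a multilinear polynomial on the product grid $\set{0,\phi}^{n}$ is an invertible linear transform of its coefficient vector (an $n$-fold Kronecker product of the $2\times 2$ matrix $\left(\begin{smallmatrix}1&0\\1&\phi\end{smallmatrix}\right)$), so $\mathrm{Ev}$ is injective even over $\C$. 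For surjectivity onto $\BW_{n}$, write $p = p_{u} + x_{n}\,p_{v}$ with $p_{u},p_{v}$ multilinear in $x_{1},\ldots,x_{n-1}$; then $\mathrm{Ev}(p) = [\mathrm{Ev}(p_{u}),\ \mathrm{Ev}(p_{u}) + \phi\,\mathrm{Ev}(p_{v})]$, so by induction and Definition~\ref{def:BW} the image of $\mathrm{Ev}$ is exactly $\setfit{[u,u+\phi v] : u,v \in \BW_{n-1}} = \BW_{n}$. (Equivalently, this shows that the $\phi^{n}\G^{N}$ summand in Lemma~\ref{lem:bw-multilinear} is already contained in the image of $\mathrm{Ev}$; one may also verify it by interpolating an arbitrary $w \in \BW_{n}$ over $\C$ and checking that the resulting coefficients are Gaussian integers.) Since $\mathrm{Ev}$ only rearranges and relabels values, for every multilinear $g$ we have $\length{f - g}^{2} = \length{r - \mathrm{Ev}(g)}^{2} = N\cdot\delta(r, \mathrm{Ev}(g))$, where $r \in \C^{N}$ is the lookup table of $f$.

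The algorithm is then as follows. Given the table of $f$ and the parameter $\eps$, regard the table as $r \in \C^{N}$ and run the list-decoder of Corollary~\ref{cor:bwlist-alg} on $r$ with \rsd $1-\eps$; it returns, in time $(1/\eps)^{O(n)} = N^{O(\log(1/\eps))}$, the complete list $L = \setfit{w \in \BW_{n} : \delta(r,w) \le 1-\eps}$. For each $w = [w_{0},w_{1}] \in L$, recover the unique integer multilinear polynomial $g = \mathrm{Ev}^{-1}(w)$ by inverting the recursion above: the evaluation vector of $p_{u}$ is $w_{0}$, the evaluation vector of $p_{v}$ is $\tfrac{1}{\phi}(w_{1}-w_{0}) \in \G^{N/2}$, and one recurses on each half (with the base case reading off a constant). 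This costs $\poly(N)$ arithmetic operations per point. Output the resulting polynomials.

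Correctness follows immediately from the bijection: $w \mapsto \mathrm{Ev}^{-1}(w)$ maps $L$ bijectively onto $\setfit{g \text{ integer multilinear} : \length{f-g}^{2} \le (1-\eps)N}$, so the output is exactly this set (in particular it is finite, contains every approximating polynomial, and contains no spurious ones). The running time is dominated by the list-decoding step, $N^{O(\log(1/\eps))}$, with the $\poly(N)\cdot\abs{L}$ cost of the interpolation post-processing absorbed into this bound; under the hypothesis $\eps = \Omega(N^{-c})$ with $c < 1$ we have $\log(1/\eps) = O(\log N)$, so this is at most quasi-polynomial in $N$ (and polynomial for constant $\eps$).

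The only point requiring genuine care — everything else being a direct invocation of Corollary~\ref{cor:bwlist-alg} — is the exact correspondence between $\BW_{n}$ and integer multilinear polynomials, and specifically the surjectivity of $\mathrm{Ev}$ onto $\BW_{n}$. This is what guarantees that each lattice point returned by the list-decoder is itself a genuine integer multilinear evaluation (rather than only a coset representative, as in the raw statement of Lemma~\ref{lem:bw-multilinear}), so that translating $L$ back to polynomials neither loses an approximating polynomial nor manufactures one that fails to lie within the required distance.
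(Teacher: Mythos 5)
Your proof is correct and follows the route the paper intends (the paper marks this corollary as an immediate consequence of Lemma~\ref{lem:bw-multilinear} together with Corollary~\ref{cor:bwlist-alg} without writing out the details). You go a little further than Lemma~\ref{lem:bw-multilinear} as literally stated by showing that the evaluation map is already a \emph{bijection} onto $\BW_n$ (so the $\phi^n\G^N$ summand is redundant), which is precisely the observation needed to make the back-and-forth translation between lattice points and integer multilinear polynomials distance-exact, and your inductive argument for this is sound.
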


Just as in our algorithmic results for BW lattices, the recursive
structure of RM codes is critically used in list-decoding algorithms
for these codes, but in a different way than in our algorithm.  The
list-decoding algorithm for $\RM{d}{n}$ given in~\cite{GopalanKZ08}
recursively list decodes one of the halves of a received word, and
then for each codeword in the list it recursively list decodes the
other half of the received word.  The recursion has depth $d$ and thus
has a total running time of $\poly(N) \cdot \ell(\eta)^{d}$, where
$\ell(\eta)$ is the list size at relative (Hamming) distance~$\eta$.
As mentioned above, a similar algorithm can work for BW lattices, but
the natural analysis implies a super-polynomial $\ell(\eta)^{n}$ lower
bound on the running time, since now the recursion has depth $n$.  The
reason we can overcome this potential bottleneck is the existence of
the linear automorphism $\calt$ of $\BW_{n}$, which allows us to make
only a \emph{constant} number of recursive calls (independently of
each other), plus a $\poly(N) \cdot \ell(\eta)^{2}$-time combining
step, which yields a runtime of the form $O(1)^{n} \cdot \poly(N)
\cdot \ell(\eta)^{2} = \poly(N) \cdot \ell(\eta)^{2}$.

% Our combinatorial techniques are specific to
% the Euclidean space and to BW lattices, and thus they differ from the
% techniques used in~\cite{GopalanKZ08, KaufmanLP10}.
 
We note that $\RM{d}{n}$ codes are efficiently list decodable up to a
radius larger than the minimum distance~\cite{GopalanKZ08}, and remark
that while RM codes are some of the oldest and most intensively
studied codes, it was not until recently that their list-decoding
properties have been very well
understood~\cite{PellikaanW04,GopalanKZ08,KaufmanLP10}.

\subsection{Other Related Work}
\label{sec:other-related-work}

Cohn and Heninger~\cite{cohnHeninger} study a list-decoding model on
polynomial lattices, under both the Hamming metric and certain
`non-Archimedian' norms.  Their polynomial analogue of Coppersmith's
theorem~\cite{Coppersmith01} implies, as a special case, Guruswami and
Sudan's result on list decoding Reed-Solomon codes~\cite{GuSu}.

Decoding and list decoding in the Euclidean space has been also
considered for embeddings into real vector spaces of codes classically
defined over finite fields.  These embeddings can give rise to
so-called spherical codes, where the decoding problem has as input a
received vector on the unit sphere, and is required to output the
points in the code (also on the unit sphere) that form a small angle
with the given target.  Another related decoding model is
soft-decision decoding, where for each position of the received word,
each alphabet symbol is assigned a real-valued weight representing the
confidence that the received symbol matches it.  Soft decision unique
decoding for RM codes was studied
in~\cite{DumerK00,DumerS06,DumerS06-2}, and list-decoding algorithms
were shown in~\cite{DumerKT08,FourquetT08}.

Further, the question of decoding lattices is related to the
well-studied {\em vector quantization} problem.  In this problem,
vectors in the ambient space need to be rounded to nearby points of a
discrete lattice; for further details on this problem see, for
example,~\cite{ConwayS98}.

\ifnum\conf=2
\smallskip
{\bf Organization.}
\else
\paragraph{Organization.}
\fi
\ifnum\conf=1
In Section~\ref{sec:combinatorial-bounds} we prove our combinatorial
upper bounds for BW lattices (due to space restrictions, we defer the
lower bounds to Appendix~\ref{sec:lower-bounds}).  In
Section~\ref{sec:list-decod-alg} we present and analyze our main
list-decoding algorithm.  We conclude with several open problems in
Section~\ref{sec:disc-open-probl}.  All missing proofs may be found in
the appendix.

\else
In Section~\ref{sec:combinatorial-bounds} we prove our combinatorial
upper and lower bounds for BW lattices.  In
Section~\ref{sec:list-decod-alg} we present and analyze our main
list-decoding algorithm.  We conclude with several open problems in
Section~\ref{sec:disc-open-probl}. 

\fi

%%% Local Variables: 
%%% mode: latex
%%% TeX-master: "BWlist"
%%% End: 

%\input{notesforintro}

%\input{preliminaries}

\section{Combinatorial Bounds}
\label{sec:combinatorial-bounds}

We start with a few basic definitions.  For a lattice $\call$, a
vector $r \in \C^{m}$ (often called a received word) and any $\eta
\geq 0$, define $L_{\call}(r, \eta) = \set{ x \in \call : \delta(r,x)
  \leq \eta} $ to be the list of lattice points $w \in \call$ such
that $\delta(r,w) \leq \eta$.  We often omit the subscript $\call$
when the lattice is clear from context.  For $\eta \geq 0$ and
nonnegative integer $n$ with $N=2^{n}$, we define $\ell(\eta, n) =
\max_{r \in \C^{n}} \abs{L_{\BW_{n}}(r, \eta)}$ to be the maximum list
size for \rsd $\eta$, for the $n$th Barnes-Wall lattice.

\subsection{Helpful Lemmas}

We start with two simple but important observations about Barnes-Wall
lattices.  The first relates the \rsd's between the respective ``left''
and ``right'' halves of a received word and a lattice point.  The
second relates the list sizes for the same \rsd but different
dimensions.

\begin{lemma}
  \label{lem:delta-relation}
  Let $r=[r_0, r_1]\in \C^N$ with $r_0, r_1 \in \C^{N/2}$, and $w=[u,
  u+\phi v]\in \BW_{n}$ for $u,v \in \BW_{n-1}$.  Let
  $\eta=\delta(r, w)$, $\eta_0=\delta(r_0, u)$ and
  $\eta_1=\delta(\frac{1}{\phi} (r_1-u), v)$.  Then
  $\eta=\frac{\eta_0}{2}+\eta_1$.
\end{lemma}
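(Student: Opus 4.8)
The plan is to unwind both sides of the claimed identity into squared Euclidean norms and match them term by term, keeping careful track of two things: the dimension-dependent normalization in the definition of $\delta$ (which divides by the ambient dimension, $N$ for $r,w$ but $N/2$ for their halves), and the scalar $\phi = 1+i$, for which $\abs{\phi}^2 = 2$. Note that membership $w \in \BW_{n}$ plays no role; only the shape $w = [u,\, u+\phi v]$ is used.

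First I would expand the left-hand side directly from the definitions: $\eta = \delta(r,w) = \frac{1}{N}\length{r-w}^2$, and since $r = [r_0,r_1]$ and $w = [u,\, u+\phi v]$ are split along the same coordinate partition, the squared norm is additive across the split, giving $\eta = \frac{1}{N}\bigl(\length{r_0-u}^2 + \length{r_1-u-\phi v}^2\bigr)$.

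Next I would rewrite each summand in terms of $\eta_0$ and $\eta_1$. For the left half, $\eta_0 = \delta(r_0,u) = \frac{1}{N/2}\length{r_0-u}^2$, so $\length{r_0-u}^2 = \frac{N}{2}\,\eta_0$. For the right half, the key algebraic step is to factor out $\phi$: $r_1 - u - \phi v = \phi\bigl(\tfrac{1}{\phi}(r_1-u) - v\bigr)$, hence $\length{r_1-u-\phi v}^2 = \abs{\phi}^2\,\length{\tfrac{1}{\phi}(r_1-u)-v}^2 = 2\,\length{\tfrac{1}{\phi}(r_1-u)-v}^2$; and since $\eta_1 = \delta(\tfrac{1}{\phi}(r_1-u),\, v) = \frac{1}{N/2}\length{\tfrac{1}{\phi}(r_1-u)-v}^2$, this summand equals $2\cdot\frac{N}{2}\,\eta_1 = N\,\eta_1$. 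Substituting back yields $\eta = \frac{1}{N}\bigl(\frac{N}{2}\eta_0 + N\eta_1\bigr) = \frac{\eta_0}{2} + \eta_1$.

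There is no real obstacle here; the content is entirely bookkeeping. The one place to be attentive is the interplay between the $N/2$ normalization on the half-length vectors (which is what produces the factor $\tfrac12$ in front of $\eta_0$) and the factor $\abs{\phi}^2 = 2$ picked up on the right half (which it exactly cancels), so these two effects should be carried through explicitly rather than by analogy.
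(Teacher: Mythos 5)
Your proof is correct and is essentially the same calculation as the paper's, just carried out at the level of raw squared norms rather than via the identities $\delta([a,b]) = \tfrac12(\delta(a)+\delta(b))$ and $\delta(\phi z) = \abs{\phi}^2\delta(z)$ that the paper applies directly. The bookkeeping of the $N$ vs.\ $N/2$ normalization and the factor $\abs{\phi}^2=2$ is exactly what the paper's one-line chain of equalities encodes.
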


\ifnum\conf=2
\begin{IEEEproof}
  We have
  \begin{align*}
    \delta(r, w) &= \frac{\delta(r_{0},u) + \delta(r_{1}, u+\phi
      v)}{2} \\
    &= \frac{\eta_{0}}{2} + \frac{\abs{\phi}^{2} \cdot
      \delta(\frac{1}{\phi} (r_{1}-u), v)}{2} \\
    &= \frac{\eta_{0}}{2} +
    \eta_{1}. \quad \IEEEQED
  \end{align*} \let\IEEEQED\relax
\end{IEEEproof}
\else
\begin{proof}
  We have
  \[ \delta(r, w) = \frac{\delta(r_{0},u) + \delta(r_{1}, u+\phi
    v)}{2} 
    = \frac{\eta_{0}}{2} + \frac{\abs{\phi}^{2} \cdot
    \delta(\frac{1}{\phi} (r_{1}-u), v)}{2} 
    = \frac{\eta_{0}}{2} +
  \eta_{1}. \qedhere \]  
\end{proof}
\fi

\begin{lemma}
  \label{lem:same-rsd}
  For any $\eta \geq 0$ and $n \geq 1$, we have $\ell(\eta, n-1)
  \leq \ell(\eta, n).$
\end{lemma}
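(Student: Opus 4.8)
The plan is to exhibit a simple ``lifting'' map that embeds $\BW_{n-1}$ into $\BW_{n}$ while preserving \rsd to a suitably lifted received word. First I would observe that for \emph{any} $u \in \BW_{n-1}$, the vector $[u,u] \in \C^{N}$ lies in $\BW_{n}$: this is just the recursive definition (Definition~\ref{def:BW}) applied with the choice $v = 0 \in \BW_{n-1}$, since $[u, u] = [u, u + \phi \cdot 0]$. Moreover the map $u \mapsto [u,u]$ is clearly injective.

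Next, fix any $\eta \geq 0$ and let $r \in \C^{N/2}$ be a received word that attains the maximum list size for $\BW_{n-1}$ at \rsd $\eta$, i.e.\ $\abs{L_{\BW_{n-1}}(r,\eta)} = \ell(\eta, n-1)$. Define the lifted received word $r' = [r, r] \in \C^{N}$. For any $u \in \BW_{n-1}$ I would compute, directly from the definition of \rsd (equivalently, by Lemma~\ref{lem:delta-relation} with $v = 0$), that
\[ \delta(r', [u,u]) = \frac{\delta(r,u) + \delta(r,u)}{2} = \delta(r,u). \]
Hence whenever $u \in L_{\BW_{n-1}}(r, \eta)$ we get $[u,u] \in L_{\BW_{n}}(r', \eta)$.

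Combining these two observations, the injective map $u \mapsto [u,u]$ carries $L_{\BW_{n-1}}(r,\eta)$ into $L_{\BW_{n}}(r', \eta)$, so
\[ \ell(\eta, n-1) = \abs{L_{\BW_{n-1}}(r,\eta)} \leq \abs{L_{\BW_{n}}(r',\eta)} \leq \ell(\eta, n), \]
where the last inequality is by definition of $\ell(\eta,n)$ as a maximum over all received words in $\C^{N}$. This completes the argument.

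There is really no substantive obstacle here; the only points requiring care are the two elementary verifications — that $[u,u] \in \BW_{n}$ (membership, not just lattice-closure) and that the \rsd is unchanged under the lift — both of which are immediate. If one prefers to avoid invoking Lemma~\ref{lem:delta-relation}, the distance identity also follows in one line from the fact that $\delta([a,b]) = \tfrac12(\delta(a) + \delta(b))$ for $a,b \in \C^{N/2}$, applied to $a = b = r - u$.
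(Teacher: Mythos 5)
Your proposal is correct and is essentially the same argument as the paper's proof: lift the received word $r$ to $[r,r]$ and each lattice point $u$ to $[u,u] \in \BW_n$ (via $v=0$ in the recursive definition), observe that the \rsd is preserved, and conclude by the injectivity of the lifting map. The extra remarks about injectivity and the alternative one-line distance computation are fine but do not change the substance.
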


\begin{proof}
  Let $r\in \C^{N/2}$ and $w\in L(r, \eta) \subseteq \BW_{n-1}$.
  Then $\delta([r, r], [w, w])= \delta(r, w)$, and since $[w, w]\in
  \BW_{n}$ (because $w\in \BW_{n-1}$) it follows that $[w,w]\in L([r,
  r], \eta)$.
\end{proof}

%\subsection{Johnson Bound}
%\label{sec:johnson-bound}

We next state a Johnson-type bound on the list size for arbitrary
lattices; see, e.g.,~\cite{Bollobas, GS-johnson, madhulectnotes,
  MGbook02} for proofs.  Note that these sources work in $\R^N$; our
form follows because the standard isomorphism between $\C^N$ and
$\R^{2N}$ as real vectors spaces also preserves Euclidean norm.

\begin{lemma}[{\bf Johnson bound}]
  \label{lem:johnson}
  Let $\call \subset \C^N$ be a lattice of \rsmd
  $\delta=\delta(\call)$ and let $r\in \C^N$.  Then
  \begin{enumerate}
  \item $\abs{L(r, \frac{\delta}{2})} \leq 4N$, and \label{item:johnson-1}
  \item $\abs{L(r, \delta \cdot (\tfrac12-\epsilon))} \leq
    \frac{1}{2\epsilon}$ for any $\epsilon >
    0$. \label{item:johnson-2}
  \end{enumerate}
\end{lemma}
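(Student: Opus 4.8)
The plan is to reduce both parts to a standard sphere-packing argument in Euclidean space, using the norm- and (real-)inner-product-preserving identification of $\C^{N}$ with $\R^{2N}$, so that the lattice's minimum distance becomes a lower bound on pairwise distances of the list points; see~\cite{Bollobas,GS-johnson,madhulectnotes,MGbook02} for the real-valued versions. Concretely, let $m=\abs{L(r,\eta)}$ and let $w_{1},\dots,w_{m}\in\call$ be the distinct list points, so $\length{r-w_{i}}^{2}\leq\eta N$ for each $i$. I would set $v_{i}=w_{i}-r$; since the $w_{i}$ are distinct lattice points, $\length{v_{i}-v_{j}}^{2}=\length{w_{i}-w_{j}}^{2}\geq\lambda(\call)^{2}=\delta N$ for $i\neq j$. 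Writing $\inner{\cdot,\cdot}$ for the real inner product on $\R^{2N}\cong\C^{N}$, polarization then gives the key estimate
\[ \inner{v_{i},v_{j}}=\tfrac12\parensfit{\length{v_{i}}^{2}+\length{v_{j}}^{2}-\length{v_{i}-v_{j}}^{2}}\leq\tfrac12\parensfit{2\eta N-\delta N}=\parensfit{\eta-\tfrac{\delta}{2}}N \]
for all $i\neq j$.

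For part~\ref{item:johnson-2}, I would substitute $\eta=\delta(\tfrac12-\epsilon)$, so that $\inner{v_{i},v_{j}}\leq-\epsilon\delta N$ for every $i\neq j$, and then expand a nonnegative square:
\[ 0\leq\lengthfit{\sum_{i=1}^{m}v_{i}}^{2}=\sum_{i=1}^{m}\length{v_{i}}^{2}+\sum_{i\neq j}\inner{v_{i},v_{j}}\leq m\eta N-m(m-1)\epsilon\delta N. \]
Assuming $m\geq1$ (else the bound is trivial) and dividing by $mN>0$ yields $(m-1)\epsilon\delta\leq\eta=\delta(\tfrac12-\epsilon)$, hence $m\epsilon\leq\tfrac12$, i.e.\ $m\leq\tfrac{1}{2\epsilon}$.

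For part~\ref{item:johnson-1}, setting $\eta=\tfrac{\delta}{2}$ makes the key estimate read $\inner{v_{i},v_{j}}\leq0$ for all $i\neq j$. First I would dispose of the degenerate case: if $r\in\call$, then any lattice point $w\neq r$ has $\delta(r,w)\geq\delta>\tfrac{\delta}{2}$, so $L(r,\tfrac{\delta}{2})=\set{r}$ and $m=1\leq4N$. Otherwise all $v_{i}$ are nonzero, and I would invoke the classical fact that $\R^{d}$ contains at most $2d$ nonzero vectors with pairwise nonpositive inner products; applying it with $d=2N$ gives $m\leq4N$. That fact itself I would prove by induction on $d$: project $v_{2},\dots,v_{m}$ orthogonally to $v_{1}$, check that at most one of them can become the zero vector (two antiparallel-to-$v_{1}$ vectors would have a positive inner product) and that the rest still have pairwise nonpositive inner products in dimension $d-1$, so $m-2\leq2(d-1)$; the base case $d=1$ is immediate. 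The main (mild) obstacle is precisely this part: the averaging argument of part~\ref{item:johnson-2} degenerates as $\epsilon\to0$ and says nothing, so part~\ref{item:johnson-1} genuinely needs the linear-algebra bound on the number of pairwise-obtuse vectors plus the observation that a received word lying on the lattice collapses the list; the remaining manipulations are routine polarization and summation.
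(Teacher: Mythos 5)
The paper does not actually prove this lemma: it cites standard references (Bollob\'as, Guruswami--Sudan lecture notes, etc.) and merely notes that the complex statement follows from the real one via the norm-preserving identification $\C^{N}\cong\R^{2N}$. Your proposal is a correct, self-contained reconstruction of exactly the argument those sources give. The polarization step correctly turns the packing constraint $\length{v_i-v_j}^2\geq\delta N$ plus $\length{v_i}^2\leq\eta N$ into $\inner{v_i,v_j}\leq(\eta-\tfrac{\delta}{2})N$; the averaging $0\leq\length{\sum_i v_i}^2$ then yields part~\ref{item:johnson-2} after dividing by $mN$ and by $\delta>0$. For part~\ref{item:johnson-1} you correctly identify that the averaging degenerates at $\eta=\delta/2$, dispose of the $r\in\call$ case (where the list is a singleton), and invoke the classical bound of $2d$ pairwise-obtuse nonzero vectors in $\R^{d}$, whose inductive proof you sketch accurately: projecting onto $v_1^{\perp}$ only decreases the inner products since $\inner{Pv_i,Pv_j}=\inner{v_i,v_j}-\inner{v_i,v_1}\inner{v_j,v_1}/\length{v_1}^2$ and all three cross terms are $\leq 0$, and at most one $v_i$ besides $v_1$ can project to zero since two negative multiples of $v_1$ would have positive inner product. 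Applying this with $d=2N$ gives $m\leq 4N$. In short, your proof fills in the argument the paper only references, and it is correct.
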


(In reading these bounds, recall that $\delta(\call)/4$, not
$\delta(\call)/2$, is the relative unique-decoding distance of
$\call$, because $\delta(\call)$ is the relative \emph{squared}
minimum distance of the lattice.)

%\begin{proof} \cnote{Didn't update this notation $\delta \to \eta$}
%  Let $L(r,\delta)=\set{b_1, \ldots, b_t} \subset \call$, where
%  $\delta=\frac{\delta(\call)}{2}$ or $\delta=\frac{\delta(\call)}{2}-\eps$ is the \rsd of
%  interest above.  Let $x_{i}=\frac{1}{\sqrt N}(b_i-r)$ for each $i$,
%  which implies that $\length{x_{i}}^2=\frac{1}{N} \length{b_{i}-r}^2
%  \leq \delta$.  Since $b_i, b_j \in \call$ are distinct for $i \neq
%  j$, it follows that $\length{x_{i}-x_{j}}^2=\frac{1}{N}
%  \length{b_i-b_j}^2 \geq \delta(\call)=1$.  Since
%  \[ \length{x_{i}-x_{j}}^2 = \length{x_{i}}^{2} + \length{x_{j}}^{2}
%  -\inner{ x_{i}, x_{j}} - \inner{ x_{j}, x_{i}} =
%  \length{x_{i}}^{2} + \length{x_{j}}^{2} - 2 \Re \inner{x_{i},
%    x_{j}}, \] we have (for $i \neq j$)
%  \[ \Re \inner{x_{i}, x_{j}} = \tfrac{1}{2} (\length{x_{i}}^{2} +
%  \length{x_{j}}^{2} - \length{x_{i}-x_{j}}^2) \leq
%  \delta-\frac{1}{2}. \] By substituting $\delta=\frac{1}{2}$ we get
%  that $\Re \inner{x_{i}, x_{j}} \leq 0$, and the first claim follows
%  from Item~\ref{item:separate-1} of Lemma \ref{lem:geom}.  Similarly,
%  by substituting $\delta=\frac{1}{2}-\eps$ we get that $\length{x_{i}}^2\leq \frac{1}{2}-\eps$ and 
%  $\Re
%  \inner{x_{i}, x_{j}} \leq -\eps$, and so the second claim follows
%  from Item~\ref{item:separate-2} of Lemma \ref{lem:geom}.
%\end{proof}

\begin{corollary}
  \label{cor:johnson}
  For the lattice $\BW_{n} \subseteq \C^{N}$ and any $\epsilon > 0$,
  we have $\ell(\frac12, n)=4N$ and $\ell(\frac12-\eps, n) \leq
  \frac{1}{2\eps}$.
\end{corollary}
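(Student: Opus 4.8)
The two upper bounds are immediate from the Johnson bound (Lemma~\ref{lem:johnson}) applied to $\call = \BW_n \subseteq \C^N$, which has \rsmd $\delta(\BW_n) = 1$: part~\ref{item:johnson-1} gives $\abs{L(r,\tfrac12)} \le 4N$ for every $r\in\C^N$, hence $\ell(\tfrac12,n)\le 4N$, and part~\ref{item:johnson-2} gives $\ell(\tfrac12-\eps,n)\le\tfrac1{2\eps}$. So the only real content is to exhibit one received word $r$ with $\abs{L_{\BW_n}(r,\tfrac12)}\ge 4N$, which then also shows the Johnson bound is attained.

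My plan is to take $r = \tfrac{\phi}{2}\mathbf{1} = \tfrac{1+i}{2}\mathbf{1}\in\C^N$ (equivalently $r = \tfrac12 c$ with $c=\phi\mathbf{1}\in\BW_n$); note $\tfrac{1+i}{2}$ is the deep hole of $\G=\Z[i]$. The first step is to show $L_{\BW_n}(r,\tfrac12) = \{0,1,i,\phi\}^{N}\cap\BW_n$. Indeed, among all Gaussian integers the four points $0,1,i,\phi$ are precisely those at squared distance $\tfrac12$ from $\tfrac{1+i}{2}$, and every other Gaussian integer is at squared distance at least $\tfrac52$; hence $\length{w-r}^2 = \sum_{j}\abs{w_j-\tfrac{1+i}2}^2$ is a sum of $N$ terms each at least $\tfrac12$, and a single term exceeding $\tfrac12$ already forces the sum above $\tfrac{N-1}{2}+\tfrac52 > \tfrac N2$. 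So $\delta(w,r)\le\tfrac12$ forces every coordinate of $w$ into $\{0,1,i,\phi\}$, and conversely every such $w\in\BW_n$ attains $\delta(w,r)=\tfrac12$.

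Next I would count $\abs{\{0,1,i,\phi\}^N\cap\BW_n}$. A short induction on the recursive definition of $\BW_n$ shows that reducing coordinates modulo $\phi$ sends $\BW_n$ onto the repetition code $\{\bar 0,\bar 1\}\subseteq\F_2^N$. Since $0,\phi\equiv 0$ and $1,i\equiv 1\pmod\phi$, any $v\in\{0,1,i,\phi\}^N\cap\BW_n$ therefore has all coordinates in $\{0,\phi\}$ or all in $\{1,i\}$; writing $v=\phi u$ in the first case and $v=\mathbf{1}-\bar\phi u$ in the second (with $u\in\{0,1\}^N$, using $\mathbf{1}\in\BW_n$ and $\bar\phi\G=\phi\G$), both cases biject with $C_n := \{u\in\{0,1\}^N : \phi u\in\BW_n\}$. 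Hence $\abs{\{0,1,i,\phi\}^N\cap\BW_n}=2\abs{C_n}$, and it suffices to prove $\abs{C_n}\ge 2^{n+1}$. I would do this by induction: $C_0=\{0,1\}$, and for $n\ge1$ the recursive test ``$[x_0,x_1]\in\BW_n \iff x_0\in\BW_{n-1}$ and $x_1-x_0\in\phi\BW_{n-1}$'' shows both $[a,a]\in C_n$ and $[a,\mathbf{1}-a]\in C_n$ for every $a\in C_{n-1}$ — the latter because its right-minus-left part is $\mathbf{1}-2a$, with $2a=\bar\phi\cdot(\phi a)\in\BW_{n-1}$ and $\mathbf{1}\in\BW_{n-1}$. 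These two families are disjoint (no $\{0,1\}$-vector satisfies $a=\mathbf{1}-a$), so $\abs{C_n}\ge 2\abs{C_{n-1}}\ge 2^{n+1}$. Altogether $\ell(\tfrac12,n)\ge 2\abs{C_n}\ge 2^{n+2}=4N$, so $\ell(\tfrac12,n)=4N$.

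The step needing care is the recursive lower bound on $\abs{C_n}$ — specifically checking that $[a,\mathbf{1}-a]$ actually lies in $\BW_n$, which rests on the identities $2=\phi\bar\phi$ and $\mathbf{1}\in\BW_n$ and on unwinding the recursive membership test correctly; the rest is bookkeeping. (One can in fact identify $C_n$ exactly with the first-order Reed--Muller code $\RM{1}{n}$, so that $\abs{C_n}=2^{n+1}$ and the list has size exactly $4N$, in harmony with the Johnson upper bound; but only the inequality is needed here.)
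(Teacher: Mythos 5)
Your proof is correct and follows the same route the paper takes: the two upper bounds follow from the Johnson bound (Lemma~\ref{lem:johnson}) applied with $\delta(\BW_n)=1$, and the lower bound uses the same received word $r=(\tfrac{\phi}{2},\ldots,\tfrac{\phi}{2})$. The paper merely asserts ``an easy inductive argument'' for the count $\abs{L(r,\tfrac12)}=4N$; you supply a complete version of it via the characterization $L(r,\tfrac12)=\{0,1,i,\phi\}^{N}\cap\BW_n$ and the reduction to counting your set $C_n\cong\RM{1}{n}$, and the details (in particular the membership check for $[a,\mathbf{1}-a]$ using $2=\phi\bar\phi$ and the $\G$-module structure of $\BW_n$) all check out.
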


\begin{proof}
  Since $\delta(\BW_{n})=1$, the upper bounds follow immediately by
  Lemma~\ref{lem:johnson}.  For the equality $\ell(\frac{1}{2}, n) =
  4N$, an easy inductive argument shows that $\abs{L(r, \frac{1}{2})}
  = 4N$ for the received word $r=(\frac{\phi}{2}, \ldots,
  \frac{\phi}{2})\in \C^N$.
\end{proof}

\subsection{Beyond the Johnson Bound}
\label{sec:beyond-johnson-bound}

In this section we prove our main combinatorial bounds on the list
size for Barnes-Wall lattices $\BW_{n} \subseteq \G^{N}$.  Our main
result is that the list size at \rsd $(1-\epsilon)$ is
$(1/\epsilon)^{O(n)} = N^{O(\log (1/\epsilon))}$ for any $\epsilon >
0$.  The proof strategy is inductive, and is based on a careful
partitioning of the lattice vectors in the list according to the
distances of their left and right halves from the respective halves of
the received word.  Intuitively, the larger the distance on one half,
the smaller the distance on the other (Lemma~\ref{lem:delta-relation}
above makes this precise).  The total list size can therefore be
bounded using list bounds for various carefully chosen distances at
lower dimensions.  Our analysis relies on a $\poly(N)$ list-size bound
for \rsd $\frac34$, which in turn relies on a $\poly(N)$ bound for
\rsd $\frac58$.  We first prove these simpler bounds, also using a
partitioning argument.  (Note that the concrete constants appearing
below are chosen to simplify the analysis, and are likely not
optimal.)

\begin{lemma}
  \label{lem:5/8}
  For any integer $n \geq 0$, we have $\ell(\frac{5}{8}, n)\leq 4\cdot
  24^n$.
\end{lemma}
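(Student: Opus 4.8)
The plan is to induct on $n$, following the partitioning strategy sketched in Section~\ref{sec:overview}. The base case is $n = 0$: here $\BW_0 = \G = \Z[i]$ and $N = 1$, so $L(r, \tfrac58)$ is exactly the set of Gaussian integers lying in a closed disk of diameter $2\sqrt{5/8} = \sqrt{5/2} < 2$. Since any two distinct Gaussian integers in the same residue class of $(\Z/2\Z)^2$ are at Euclidean distance at least $2$, pigeonhole over the four residue classes gives $\abs{L(r, \tfrac58)} \le 4 = 4 \cdot 24^0$, as needed.

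For the inductive step ($n \ge 1$) I would fix a received word $r = [r_0, r_1]$ with $r_0, r_1 \in \C^{N/2}$ and consider an arbitrary $w = [w_0, w_1] = [u, u + \phi v] \in \BW_n$, $u, v \in \BW_{n-1}$, with $\delta(r,w) \le \tfrac58$. By the Pythagorean theorem $\delta(r,w) = \tfrac12(\delta(r_0, w_0) + \delta(r_1, w_1))$, so at least one half-distance is at most $\tfrac58$; since $\BW_n$ is closed under swapping its two halves (shown after Definition~\ref{def:BW}), I can assume $\eta_0 := \delta(r_0, w_0) \le \tfrac58$ at the cost of a factor of $2$ in the final count. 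Writing $\eta_1 := \delta(\tfrac1\phi(r_1 - u), v)$, Lemma~\ref{lem:delta-relation} gives $\tfrac58 \ge \delta(r,w) = \tfrac{\eta_0}{2} + \eta_1$, hence $\eta_1 \le \tfrac58 - \tfrac{\eta_0}{2}$.

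The heart of the argument is to split on the value of $\eta_0$, using that $\delta(\BW_{n-1}) = 1$ so that the Johnson bound (Lemma~\ref{lem:johnson}, item~\ref{item:johnson-2} with $\epsilon = \tfrac1{12}$) caps the number of lattice points within \rsd $\tfrac5{12} = 1 \cdot (\tfrac12 - \tfrac1{12})$ of any target at $6$. If $\eta_0 \in [0, \tfrac5{12})$, then there are at most $6$ choices of $u = w_0$ by the Johnson bound, and for each, $\eta_1 \le \tfrac58$, so at most $\ell(\tfrac58, n-1)$ choices of $v$ by induction. If $\eta_0 \in [\tfrac5{12}, \tfrac58]$, then there are at most $\ell(\tfrac58, n-1)$ choices of $u$ by induction, and for each, $\eta_1 \le \tfrac58 - \tfrac5{24} = \tfrac5{12}$, so at most $6$ choices of $v$ by the Johnson bound. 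In either case the number of pairs $(u, v)$ — equivalently, of points $w \in \BW_n$ with $\delta(r_0, w_0) \le \tfrac58$ — is at most $12\,\ell(\tfrac58, n-1)$, and including the swap factor of $2$ yields $\ell(\tfrac58, n) \le 24\,\ell(\tfrac58, n-1)$; combined with the base case this gives $\ell(\tfrac58, n) \le 4 \cdot 24^n$.

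I expect no real difficulty here beyond bookkeeping; the one point that genuinely requires care — and explains why the constant $\tfrac5{12}$ appears rather than the naive $\tfrac58/2 = \tfrac5{16}$ — is choosing the split point so that whichever of $\eta_0, \eta_1$ is handled by the Johnson bound stays in the regime $\le \tfrac12 - \Omega(1)$ where that bound is $O(1)$, while the complementary quantity stays at \rsd at most $\tfrac58$ so the inductive hypothesis (same \rsd, smaller dimension) still applies. This same balancing act, refined with more cases, is what the later bounds for \rsd $\tfrac34$ and $1 - \epsilon$ build on.
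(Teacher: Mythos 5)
Your proof is correct and follows the same strategy as the paper's: assume WLOG that $\eta_0 \le \tfrac58$ at the cost of a factor $2$, split on $\eta_0 \in [0,\tfrac5{12})$ versus $\eta_0 \in [\tfrac5{12},\tfrac58]$, apply the Johnson bound ($\le 6$) to whichever half has \rsd at most $\tfrac5{12}$ and the inductive hypothesis to the other, arriving at the recurrence $\ell(\tfrac58,n)\le 24\,\ell(\tfrac58,n-1)$. The only difference is cosmetic: you spell out the $n=0$ case (disk of radius $\sqrt{5/8}$ in $\Z[i]$, at most one Gaussian integer per residue class mod $2\Z[i]$), which the paper dismisses as ``clearly true.''
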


\begin{proof}
  The claim is clearly true for $n=0$, so suppose $n \geq 1$ with
  $N=2^{n}$.  Let $r=[r_0, r_1]\in \C^N$ with $r_0, r_{1} \in
  \C^{N/2}$ be an arbitrary received word, and let $w=[u, u+\phi v]
  \in L(r, \frac58)$ for $u, v \in \BW_{n-1}$.  Let $\eta =
  \delta(r,w) \leq \frac58$, $\eta_0=\delta(r_0, u)$ and
  $\eta_1=\delta(\frac{1}{\phi}(r_1-u), v)$.

  Note that $\eta = \frac 12 (\delta(r_{0}, u) + \delta(r_{1}, u+
  \phi v)) \leq \frac58$.  Without loss of generality, we can assume
  that $\eta_{0} = \delta(r_{0}, u) \leq \frac58$.  For if not, then
  we would have $\delta(r_{1}, u+\phi v) \leq \frac58$, and since $[a,
  b]\in \BW_{n}$ implies $[b, a]\in \BW_{n}$ for $a,b \in \G^{N/2}$,
  we could instead work with the received word $r'=[r_{1}, r_{0}]$ and
  $w' = [u+\phi v, u] \in L(r', \frac58)$.  This incurs a factor of at
  most $2$ in the total list size, which we account for in the
  analysis below.

  Assuming $\eta_{0} \leq \frac58$, we now split the analysis into two
  cases: $\eta_0\in [0, \frac{5}{12})$, and $\eta_0\in [\frac{5}{12},
  \frac5{8}]$.  By Lemma~\ref{lem:delta-relation}, these cases
  correspond to $\eta_1 \leq \frac58$ and $\eta_1 \leq \frac{5}{12}$,
  respectively.  Since $u \in L(r_{0}, \eta_{0})$ and $v \in L(
  \frac{1}{\phi}(r_{1}-u), \eta_{1})$, after incorporating the factor
  of $2$ from the argument above we have (where for conciseness we
  write $\ell(\eta)$ for $\ell(\eta,n-1)$):
%   \ifnum\conf=2
%   \begin{align*}
%     \label{eq:5/12}
%     \ell(\tfrac{5}{8}, n) &\leq 2 \cdot
%     \ell(\tfrac{5}{12})\cdot \ell(\tfrac{5}{8}) \\
%     &+ 2 \cdot \ell(\tfrac{5}{8}) \cdot \ell(\tfrac{5}{12}) \\
%     &= 4\cdot \ell(\tfrac{5}{12}) \cdot \ell(\tfrac{5}{8}) \\
%     &\leq 4 \cdot 6 \cdot \ell(\tfrac{5}{8}) \\
%     &\leq 24^{n} \cdot \ell(\tfrac{5}{8}, 0),
%   \end{align*}
%   \else
  \begin{align*}
    \label{eq:5/12}
    \ell(\tfrac{5}{8}, n) &\leq 2 \cdot \parensfit{
      \ell(\tfrac{5}{12}) \cdot \ell(\tfrac{5}{8}) +
      \ell(\tfrac{5}{8}) \cdot \ell(\tfrac{5}{12}) } \\
    &= 4\cdot \ell(\tfrac{5}{12}) \cdot \ell(\tfrac{5}{8}) \\
    &\leq 4 \cdot 6 \cdot \ell(\tfrac{5}{8}) \\
    &\leq 24^{n} \cdot \ell(\tfrac{5}{8}, 0),
  \end{align*}
  % \fi
  where the penultimate inequality is by Corollary~\ref{cor:johnson},
  and the final one is by unwinding the recurrence.
\end{proof}

\begin{lemma}
  \label{lem:3/4}
  For any integer $n \geq 0$, we have $\ell(\frac{3}{4}, n) \leq 4
  \cdot 24^{2n}$.
\end{lemma}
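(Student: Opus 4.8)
The plan is to mirror the partitioning argument of Lemma~\ref{lem:5/8}, but now splitting into \emph{three} ranges for the left-half distance and bootstrapping from the $\tfrac58$ bound just established. Assume $n \geq 1$ (the case $n=0$ is immediate, since a disk of radius $<1$ in $\C\cong\R^2$ contains at most one point of $\G=\Z[i]$ in each of the four residue classes mod~$2$, so $\ell(\tfrac34,0)\leq 4$). Fix an arbitrary $r=[r_0,r_1]\in\C^N$ and $w=[u,u+\phi v]\in L(r,\tfrac34)$ with $u,v\in\BW_{n-1}$, and put $\eta=\delta(r,w)\leq\tfrac34$, $\eta_0=\delta(r_0,u)$, $\eta_1=\delta(\tfrac1\phi(r_1-u),v)$, so that $\eta=\tfrac{\eta_0}{2}+\eta_1$ by Lemma~\ref{lem:delta-relation}. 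Using the swap symmetry $[a,b]\in\BW_n\iff[b,a]\in\BW_n$ exactly as in Lemma~\ref{lem:5/8}, we may assume $\eta_0\leq\tfrac34$ at the cost of a factor $2$ in the final count.

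I would then classify each $w$ in the list by the range of $\eta_0$. (i) If $\eta_0\in[0,\tfrac14)$, then $\eta_1\leq\tfrac34$, so there are at most $\ell(\tfrac34,n-1)$ choices of $v$ for each $u$, while $u\in L(r_0,\tfrac14)=L(r_0,1\cdot(\tfrac12-\tfrac14))$ gives at most $2$ choices of $u$ by the Johnson bound (Lemma~\ref{lem:johnson}(\ref{item:johnson-2}) with $\eps=\tfrac14$). (ii) If $\eta_0\in[\tfrac14,\tfrac58]$ --- precisely the range forcing $\eta_1=\eta-\tfrac{\eta_0}{2}\leq\tfrac34-\tfrac18=\tfrac58$ as well --- then both $u$ and (for each fixed $u$) $v$ lie in lists of \rsd at most $\tfrac58$ in dimension $n-1$, so by Lemma~\ref{lem:5/8} the number of such $w$ is at most $\ell(\tfrac58,n-1)^2\leq(4\cdot24^{n-1})^2=16\cdot24^{2n-2}$. (iii) If $\eta_0\in(\tfrac58,\tfrac34]$, then $\eta_1\leq\tfrac34-\tfrac{5}{16}=\tfrac{7}{16}=1\cdot(\tfrac12-\tfrac1{16})$, so there are at most $8$ choices of $v$ per $u$ (Johnson, $\eps=\tfrac1{16}$), while $u\in L(r_0,\tfrac34)$ gives at most $\ell(\tfrac34,n-1)$ choices. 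Summing the three contributions and the factor $2$ yields $\ell(\tfrac34,n)\leq 2\bigl(2\ell(\tfrac34,n-1)+16\cdot24^{2n-2}+8\ell(\tfrac34,n-1)\bigr)=20\,\ell(\tfrac34,n-1)+32\cdot24^{2n-2}$, and induction closes the bound: $20\cdot4\cdot24^{2n-2}+32\cdot24^{2n-2}=112\cdot24^{2n-2}\leq 4\cdot24^{2n}$ (with considerable slack, consistent with the remark that these constants are not optimized).

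The one point requiring care --- and the reason range~(ii) must be controlled by Lemma~\ref{lem:5/8} rather than by the inductive hypothesis on $\ell(\tfrac34,\cdot)$ --- is to keep the recurrence \emph{linear} in $\ell(\tfrac34,\cdot)$. Bounding the middle range naively by $\ell(\tfrac34,n-1)^2$ would give a quadratic recurrence and hence a doubly-exponential (in $n$) bound; instead, restricting $\eta_0$ to $[\tfrac14,\tfrac58]$ guarantees $\eta_1\leq\tfrac58$ too, so both halves are governed by the already-proven $\poly(N)$ quantity $\ell(\tfrac58,n-1)$, turning that term into a harmless \emph{additive} $\poly(N)$ contribution. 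The split points $\tfrac14$ and $\tfrac58$ are then essentially forced: $\tfrac14$ is where $\eta_1$ first drops to $\tfrac58$, $\tfrac58$ is where the $v$-list becomes Johnson-bounded (indeed $\eta_1<\tfrac12$ there), and each of the two extreme ranges keeps one side below $\tfrac12$ (for the Johnson bound) while the other stays at $\eta_0\leq\tfrac34$ (for the induction).
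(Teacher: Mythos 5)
Your proof is correct and mirrors the paper's argument for Lemma~\ref{lem:3/4} exactly: the same three split points $\tfrac14$, $\tfrac58$, $\tfrac34$ for $\eta_0$, the same use of the Johnson bound for the two extreme ranges and Lemma~\ref{lem:5/8} for the middle range, the same $\times 2$ from the swap symmetry, and the same recurrence $\ell(\tfrac34,n) \leq 20\,\ell(\tfrac34,n-1) + 32\cdot 24^{2(n-1)}$ closed by induction. The only (correct) additions are the explicit justification of the $n=0$ base case and the closing discussion of why the split points keep the recurrence linear in $\ell(\tfrac34,\cdot)$.
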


\begin{proof}
  The claim is clearly true for $n=0$, so suppose $n \geq 1$; we
  proceed by induction on $n$.  Define the same notation as in the
  proof of Lemma~\ref{lem:5/8}, using \rsd bound $\frac34$ instead of
  $\frac58$.

  As before, we assume that $\eta_0\leq \frac34$ and account for the
  accompanying factor of $2$ in the list size.  This time we split the
  analysis into three cases: $\eta_0\in [0, \frac14)$, $\eta_0\in
  [\frac14, \frac58)$, and $\eta_0\in [\frac58, \frac34]$.  By
  Lemma~\ref{lem:delta-relation}, these correspond to $\eta_1\leq
  \frac34$, $\eta_1 \leq \frac58$, and $\eta_1 \leq \frac{7}{16}$,
  respectively.

  For conciseness, in the calculation below we write $\ell(\eta)$ for
  $\ell(\eta,n-1)$.  Using Corollary~\ref{cor:johnson},
  Lemma~\ref{lem:5/8}, and the inductive hypothesis, we have
  \begin{align*}
    \ell(\tfrac34, n) &\leq 2\cdot \parensfit{ \ell(\tfrac14) \cdot
      \ell(\tfrac34) + \ell(\tfrac{5}{8})\cdot \ell(\tfrac{5}{8}) +
      \ell(\tfrac34) \cdot \ell(\tfrac{7}{16}) } \\
    &\leq 2 \cdot (2+8) \cdot \ell(\tfrac34) +
    2\cdot \ell(\tfrac{5}{8})^{2} \\
    &\leq 20 \cdot 4 \cdot 24^{2(n-1)} + 32 \cdot 24^{2(n-1)} \\
    &\leq 4 \cdot 24^{2n}. \qedhere
  \end{align*}
\end{proof}

We are now ready to prove our main combinatorial bound. 
% (Theorem~\ref{thm:listbound}).  We restate it here for convenience.

%\begin{theorem}
%  \label{thm:listbound-again}
%  For any $n \geq 0$ and $\epsilon \in (0,1)$, we have
%  $\ell(1-\epsilon, n) \leq  4\cdot (1/\epsilon)^{16n}$.
%\end{theorem}

\begin{proof}[Proof of Theorem~\ref{thm:listbound}]
  We need to show that $\ell(1-\epsilon, n) \leq 4\cdot
  (1/\epsilon)^{16n}$ for any $n \geq 0$ and $\epsilon > 0$;
  obviously, we can assume $\epsilon \leq 1$ as well.  The claim is
  clearly true for $n=0$.  We proceed by induction on $n$; namely, we
  assume that for all $\gamma > 0$ it is the case that $\ell(1-\gamma,
  n-1) \leq 4\cdot (1/\gamma)^{16(n-1)}$.  Define the same notation as
  in the proof of Lemma~\ref{lem:5/8}, using \rsd bound $1-\epsilon$
  instead of $\frac58$.

  As in earlier proofs, we assume that $\eta_0\leq 1-\epsilon$ and
  account for the accompanying factor of $2$ in the list size.  We
  split the analysis into 3 cases: $\eta_0 \in [0, \frac12-\eps)$,
  $\eta_0\in [\frac12-\eps, 1-\frac{3\eps}{2})$, and $\eta_0\in
  [1-\frac{3\eps}{2}, 1-\eps]$.  By Lemma~\ref{lem:delta-relation},
  these correspond to $\eta_{1} \leq 1-\epsilon$, $\eta_{1} \leq
  \frac34-\frac{\epsilon}{2} < \frac34$, and $\eta_{1} \leq \frac12
  - \frac{\epsilon}{4}$, respectively.

  For conciseness, in the calculation below we write $\ell(\eta)$ for
  $\ell(\eta,n-1)$.  Using Corollary~\ref{cor:johnson},
  Lemma~\ref{lem:3/4}, and the inductive hypothesis, it follows that
  $\ell(1-\epsilon,n)$ is bounded by
  \begin{align*}
    \MoveEqLeft 2 \left( \ell(1-\eps)
      \ell(\tfrac12-\eps) + \ell(1-\eps)
      \ell(\tfrac12-\tfrac{\epsilon}{4}) +
      \ell(1-\tfrac{3\eps}{2}) \ell(\tfrac34) \right) \\
    & \leq 2 \ell(1-\eps) (\tfrac{1}{2\epsilon} +
    \tfrac{2}{\epsilon}) + 2 \ell(1-\tfrac{3\eps}{2}) \cdot 4
    \cdot 24^{2(n-1)} \\
    &= \tfrac{5}{\eps} \cdot \ell(1-\epsilon)
    + 8 \cdot 24^{2(n-1)} \cdot \ell(1-\tfrac{3\epsilon}{2}) \\
    &\leq \tfrac{20}{\eps} \cdot (\tfrac{1}{\eps})^{16(n-1)}
    +32 \cdot 24^{2(n-1)} \cdot (\tfrac{2}{3\eps})^{16(n-1)}\\
    &= (\tfrac{1}{\eps})^{16(n-1)} \cdot (\tfrac{20}{\eps}+32\cdot
    (24^{2} \cdot (\tfrac{2}{3})^{16})^{(n-1)}) \\
    &\leq (\tfrac{1}{\eps})^{16(n-1)} \cdot (\tfrac{52}{\epsilon}) \\
    &\leq 4\cdot(\tfrac{1}{\eps})^{16n}
  \end{align*}
  when $\eps \leq \frac{4}{5}$. If $\eps \in (\frac{4}{5},1]$ then
  $\ell(1-\eps, n) = 1 \leq 4\cdot(\tfrac{1}{\eps})^{16n}$, and the
  proof is complete.
\end{proof}

Notice that in the above proof, it is important to use an upper bound
like $\eta_{0} \leq 1-\frac{3\epsilon}{2}$ in one of the cases, so
that the factor $(\frac{2}{3})^{16(n-1)}$ from the inductive list
bound can cancel out the corresponding factor of $24^{2(n-1)}$ for the
corresponding \rsd bound $\eta_{1} \leq \frac34$.  This allows the
recurrence to be dominated by the term \[ \ell(1-\epsilon) \cdot
\ell(\tfrac12 - \tfrac{\epsilon}{4}) = O(\tfrac{1}{\epsilon}) \cdot
\ell(1-\epsilon), \] yielding a solution of the form
$(1/\epsilon)^{O(n)}$.

\ifnum\conf=1
Due to space restrictions, we state and prove our lower bounds in
Appendix~\ref{sec:lower-bounds}.

\else

\subsection{Lower Bounds}
\label{sec:lower-bounds}

% For our main lower bound on the list size
% (Theorem~\ref{thm:lower-bound}) we adapt the lower bound proofs for RM codes from \cite{GopalanKZ08} to the BW settings.

For our lower bounds we make use of a relationship between
Barnes-Wall lattices and Reed-Muller codes, and then apply known lower
bounds for the latter.

\ifnum\conf=2
\begin{fact}[{\cite[\S IV.B]{Forney88a}}]
  \label{fact:BW-RM}
  $ \BW_{n}=\{ \sum_{d=0}^{n-1} \phi^d  c_d + \phi^n  c_n,~~ 
  c_d\in \RM{d}{n}, 0\leq d\leq n-1;$ $  c_n\in \G^{N} \}, $ where the embedding
  of $\F_2$ into $\C$ is given by $0\mapsto 0$ and $1\mapsto 1$.  In
  particular, any codeword $c_d\in \RM{n}{d}$ gives rise to a lattice
  point $\phi^d \cdot c_d \in \BW_{n}$,
\end{fact}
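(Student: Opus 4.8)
This is precisely the result of Forney cited in the statement, so I would either reproduce his argument or simply defer to the reference; here is the plan for a self‑contained proof. The natural strategy is induction on $n$, matching the recursion $\BW_{n}=\{[u,u+\phi v]:u,v\in\BW_{n-1}\}$ against the recursion $\RM{d}{n}=\{[u,u+v]:u\in\RM{d}{n-1},\,v\in\RM{d-1}{n-1}\}$ (Definitions~\ref{def:BW} and~\ref{def:RM}), together with three elementary facts about $\G=\Z[i]$ and $\phi=1+i$: (i) $\G/\phi\G\cong\F_{2}$ with $\bit$ a complete residue system, so every $z\in\G$ has a finite base‑$\phi$ expansion $z=\sum_{j}\alpha_{j}\phi^{j}$ with $\alpha_{j}\in\bit$; (ii) $\phi^{2}=2i$, equivalently $2=-i\phi^{2}$, while $-i=1-\phi\equiv 1\pmod\phi$ — so a digit equal to $2$ at $\phi$‑level $d$ is the same thing as a unit at level $d+2$; and (iii) $\phi^{n}\G^{N}\subseteq\BW_{n}$, immediate from Lemma~\ref{lem:bw-multilinear}, plus the one‑line inductions showing $\BW_{n}$ is closed under multiplication by $\phi$ and by the units $\pm1,\pm i$. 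Write $R_{n}$ for the set on the right‑hand side of the claimed identity.

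For the inclusion $R_{n}\subseteq\BW_{n}$ — which contains the ``in particular'' assertion that $\phi^{d}c_{d}\in\BW_{n}$ for $c_{d}\in\RM{d}{n}$ — write $c_{d}=[a,a\oplus b]$ with $a\in\RM{d}{n-1}$, $b\in\RM{d-1}{n-1}$ as in Definition~\ref{def:RM}, where $\oplus$ is the $\F_{2}$ (XOR) sum. Under the embedding $\bit\hookrightarrow\C$ these are $\bit$‑vectors and, over $\C$, $a\oplus b=a+b-2(a\odot b)$ with $a\odot b$ the entrywise product; hence, using $2=-i\phi^{2}$,
\[
\phi^{d}c_{d}=\bigl[\,\phi^{d}a,\;\phi^{d}a+\phi\bigl(\phi^{d-1}b+i\,\phi^{d+1}(a\odot b)\bigr)\,\bigr].
\]
By the inductive hypothesis $\phi^{d}a$ and $\phi^{d-1}b$ lie in $\BW_{n-1}$, so by Definition~\ref{def:BW} it remains only to place the ``carry'' term $i\,\phi^{d+1}(a\odot b)$ in $\BW_{n-1}$ — which is the same kind of question one $\phi$‑level higher, and this is exactly where the recursion must be handled carefully. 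The reverse inclusion $\BW_{n}\subseteq R_{n}$ is dual: given $w=[u,u+\phi v]$, apply the inductive hypothesis to $u$ and $v$; multiplying the expansion of $v$ by $\phi$ shifts each of its $\RM{d}{n-1}$‑digits up one level, so $u+\phi v=c_{0}^{u}+\sum_{d\ge 1}\phi^{d}(c_{d}^{u}+c_{d-1}^{v})+(\text{a }\phi^{n}\text{ remainder})$, and one reads off the digits of $w$ level by level — modulo the same carries, since $c_{d}^{u}+c_{d-1}^{v}$ is a $\C$‑sum of $\bit$‑vectors and can have entries equal to $2$.

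The crux, and essentially all of the work, is controlling these carries: a digit $2$ at level $d$ reappears at level $d+2$, where it is either absorbed into $\phi^{n}\G^{N}\subseteq\BW_{n}$ (when $d+2\ge n$) or must be recombined with the higher digits, and one must check both that the process terminates (it does: base‑$\phi$ expansions of Gaussian integers are finite and each carry strictly raises the level) and that the surviving digits respect the Reed–Muller degree filtration. I would short‑circuit the explicit bookkeeping by a counting argument. First, $\phi^{n}\G^{N}$ lies in both $R_{n}$ and $\BW_{n}$. Second, the representation $\sum_{d<n}\phi^{d}c_{d}$ of an element of $R_{n}$ is unique modulo $\phi^{n}\G^{N}$: if two agree, then reducing mod $\phi$ shows the entries of $c_{0}-c_{0}'$ lie in $\{-1,0,1\}$, of which only $0$ is divisible by $\phi$, so $c_{0}=c_{0}'$; dividing by $\phi$ and iterating gives $c_{d}=c_{d}'$ for all $d$ — whence $R_{n}$ meets exactly $\prod_{d=0}^{n-1}\lvert\RM{d}{n}\rvert=2^{\sum_{d<n}\dim_{\F_{2}}\RM{d}{n}}=2^{nN/2}$ cosets of $\phi^{n}\G^{N}$. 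Third, $[\BW_{n}:\phi^{n}\G^{N}]=\det(W)=2^{nN/2}$, the fundamental‑volume computation recorded after Definition~\ref{def:BW}. Hence, once either inclusion is established set‑theoretically, these two finite families of cosets coincide and force $R_{n}=\BW_{n}$; in practice one proves $R_{n}\subseteq\BW_{n}$ directly, propagating the carries as above, and since the statement is due to Forney~\cite[\S IV.B]{Forney88a} it is equally reasonable to cite it, the outline here showing why it holds.
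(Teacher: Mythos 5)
The paper does not prove this statement at all: it is stated as a Fact and attributed entirely to Forney, so your suggestion to ``simply defer to the reference'' is exactly what the paper does, and everything else in your writeup is supplementary. That supplementary sketch has two genuine problems worth flagging.

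First, claim (i) --- that every $z\in\G$ has a \emph{finite} base-$\phi$ expansion with digits in $\bit$ --- is false. For instance $i\equiv 1\pmod\phi$ and $(i-1)/\phi=i$, so greedy digit extraction for $z=i$ cycles forever; the same happens for $z=-1$. You invoke finiteness only to justify termination of the carry process, but the operative reason is the other thing you say: each carry strictly raises the $\phi$-level, and anything at level $\geq n$ is absorbed by $\phi^{n}\G^{N}\subseteq\BW_{n}$. The finiteness claim should simply be deleted.

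Second, and more substantively, the inductive step for $R_{n}\subseteq\BW_{n}$ is not actually closed. After writing $c_{d}=[a,a\oplus b]$ you reduce to placing $i\phi^{d+1}(a\odot b)$ in $\BW_{n-1}$ and call it ``the same kind of question one $\phi$-level higher.'' But $a\odot b$ corresponds to a polynomial of degree up to $\min(2d-1,n-1)$, which in general exceeds $d+1$, so the ``in particular'' statement at level $d+1$ does \emph{not} apply to it; the induction as set up does not terminate. Fixing this requires either a stronger inductive hypothesis (e.g., $\phi^{e}c\in\BW_{n}$ for all $c\in\RM{d}{n}$ and $e\ge d$, proved by a different route) or a different decomposition, for example working directly with the monomial-indicator rows $\phi^{\mathrm{wt}(\alpha)}\,(\text{indicator of }\{\beta\ge\alpha\})$ of $W$, where the remaining obstacle is again converting an $\F_{2}$-sum of indicators into a $\G$-linear combination --- i.e., the same carry.

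The counting part of your argument is sound and is the nicest observation here: the representation $\sum_{d<n}\phi^{d}c_{d}$ is unique modulo $\phi^{n}\G^{N}$ because only $0\in\{-1,0,1\}$ is divisible by $\phi$; $\sum_{d<n}\dim_{\F_{2}}\RM{d}{n}=nN/2$; and $[\BW_{n}:\phi^{n}\G^{N}]=2^{nN/2}$ from the covolume. This genuinely reduces the Fact to one inclusion. But that one inclusion is where all the content lies, and your sketch acknowledges rather than resolves it --- which is acceptable if the intent is to cite Forney (as the paper does), but not if the sketch is meant to stand alone.
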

\else
\begin{fact}[{\cite[\S IV.B]{Forney88a}}]
  \label{fact:BW-RM}
  \[ \BW_{n}=\setfit{ \sum_{d=0}^{n-1} \phi^d \cdot c_d + \phi^n \cdot c_n, \text{ with }
  c_d\in \RM{d}{n}, 0\leq d\leq n-1, \text{ and }  c_n\in \G^{N} }, \] where the embedding
  of $\F_2$ into $\C$ is given by $0\mapsto 0$ and $1\mapsto 1$.  In
  particular, any codeword $c_d\in \RM{n}{d}$ gives rise to a lattice
  point $\phi^d \cdot c_d \in \BW_{n}$,
\end{fact}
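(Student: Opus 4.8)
The plan is to prove the claimed set equality $\BW_{n} = S_{n}$, where $S_{n}$ denotes the right-hand side, by induction on $n$, working with the \emph{recursive} definitions of both $\BW_{n}$ (Definition~\ref{def:BW}) and $\RM{d}{n}$ (Definition~\ref{def:RM}) simultaneously. The base case $n=0$ is immediate: the sum $\sum_{d=0}^{-1}$ is empty, so $S_{0} = \phi^{0}\G^{1} = \G = \BW_{0}$. Throughout I would use two elementary facts about $\G = \Z[i]$ and $\phi = 1+i$: first, since $\phi$ has norm $2$, every Gaussian integer is congruent modulo $\phi$ to exactly one of $0,1$, so $\G = \bit + \phi\G$ as a set; second, $\RM{n-1}{n-1} = \F_{2}^{N/2}$ is the full space (under the embedding $\F_{2}\hookrightarrow\C$, every $\bit$-vector of length $N/2$ is such a codeword), while by convention $\RM{-1}{m} = \set{0}$.

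For the inclusion $\BW_{n} \subseteq S_{n}$, I would take $w = [u, u+\phi v]$ with $u, v \in \BW_{n-1}$ and apply the inductive hypothesis to write $u = \sum_{d=0}^{n-2}\phi^{d} a_{d} + \phi^{n-1} a_{n-1}$ and $v = \sum_{d=0}^{n-2}\phi^{d} b_{d} + \phi^{n-1} b_{n-1}$, where $a_{d}, b_{d} \in \RM{d}{n-1}$ for $d \le n-2$ and $a_{n-1}, b_{n-1} \in \G^{N/2}$. Splitting $a_{n-1} = e_{n-1} + \phi f$ with $e_{n-1} \in \bit^{N/2} = \RM{n-1}{n-1}$ and $f \in \G^{N/2}$, and then regrouping $u + \phi v$ by powers of $\phi$, one checks that $w = \sum_{d=0}^{n-1}\phi^{d} c_{d} + \phi^{n} c_{n}$ with $c_{0} = [a_{0}, a_{0}]$, $c_{d} = [a_{d}, a_{d} + b_{d-1}]$ for $1 \le d \le n-2$, $c_{n-1} = [e_{n-1}, e_{n-1} + b_{n-2}]$, and $c_{n} \in \G^{N}$. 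The crucial observation is that multiplying $v$ by $\phi$ shifts each component $b_{d}$ of $v$ up by one power of $\phi$, which is exactly the degree decrement appearing in $\RM{d}{n} = \set{[a, a+b] : a \in \RM{d}{n-1},\ b \in \RM{d-1}{n-1}}$; hence each $c_{d}$ lies in $\RM{d}{n}$ (using $\RM{-1}{n-1} = \set{0}$ for $c_{0}$ and $\RM{n-1}{n-1} = \F_{2}^{N/2}$ for $c_{n-1}$), so $w \in S_{n}$.

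For the reverse inclusion $S_{n} \subseteq \BW_{n}$, I would run the same computation backwards: given $w = \sum_{d=0}^{n-1}\phi^{d} c_{d} + \phi^{n} c_{n}$, write $c_{d} = [a_{d}, a_{d} + b_{d}]$ with $a_{d} \in \RM{d}{n-1}$, $b_{d} \in \RM{d-1}{n-1}$ (so $b_{0} = 0$), and $c_{n} = [a_{n}, \tilde a_{n}]$. Collecting left and right halves shows $w = [U, U + \phi V]$ where $U = \sum_{d=0}^{n-2}\phi^{d} a_{d} + \phi^{n-1}(a_{n-1} + \phi a_{n})$ and $V = \sum_{d=0}^{n-2}\phi^{d} b_{d+1} + \phi^{n-1}(\tilde a_{n} - a_{n})$. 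Since $a_{n-1} + \phi a_{n}$ and $\tilde a_{n} - a_{n}$ lie in $\G^{N/2}$ and $b_{d+1} \in \RM{d}{n-1}$, both $U$ and $V$ have the form defining $S_{n-1} = \BW_{n-1}$, so by the inductive hypothesis $U, V \in \BW_{n-1}$ and therefore $w \in \BW_{n}$. The ``in particular'' clause then follows by choosing all summands except $c_{d}$ to be zero.

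The only real obstacle is keeping the indexing straight---tracking which power of $\phi$ carries which Reed-Muller degree, and correctly treating the two boundary slots: the top slot $\phi^{n}$, whose coefficient is an unconstrained element of $\G^{N}$ and from which one must peel off a $\bit$-valued layer of $a_{n-1}$, and the bottom slot $\phi^{0}$, where the relevant code is $\RM{-1}{n-1} = \set{0}$. Everything else is a routine rearrangement of finite sums; the conceptual content is simply that multiplication by $\phi$ in the $\BW$ recursion mirrors the degree decrement in the $\RM$ recursion. (One could alternatively try to argue from the multilinear-polynomial picture of Lemma~\ref{lem:bw-multilinear} by substituting $x_{i} = \phi y_{i}$ with $y_{i}\in\bit$, but reducing Gaussian-integer coefficients modulo $\phi$ and carrying the quotients up to higher powers of $\phi$ introduces terms of too-high degree to stay inside $\RM{d}{n}$, so the joint recursion is the cleaner route.)
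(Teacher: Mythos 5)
Your induction does not go through: on both sides of the inclusion you silently drop the carries that arise when adding base-$\phi$ expansions. In the forward direction you assert that $c_d=[a_d,\,a_d+b_{d-1}]$ is simultaneously an embedded $\RM{d}{n}$ codeword and the $\phi^d$ coefficient of $w=[u,\,u+\phi v]$. But whenever $a_d$ and $b_{d-1}$ share a coordinate where both equal $1$, the integer sum $a_d+b_{d-1}$ has an entry $2$ and so is not the image of an $\F_2$-vector; it differs from the embedded XOR $a_d\oplus b_{d-1}$ by $2\,a_d b_{d-1}=-i\phi^2\, a_d b_{d-1}$, a carry that lands at level $d+2$ and above. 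So either $c_d\notin\RM{d}{n}$ (if ``$+$'' means $\G$-addition) or $w\neq\sum_d \phi^d c_d+\phi^n c_n$ (if ``$+$'' means $\F_2$-addition). The reverse direction has the same defect: with $c_d=[a_d,\,a_d\oplus b_d]$, the right half of your candidate $[U,\,U+\phi V]$ is $\sum_d\phi^d(a_d+b_d)+\phi^n\tilde a_n$, which differs from the right half of $w$, namely $\sum_d\phi^d(a_d\oplus b_d)+\phi^n\tilde a_n$, by $2\sum_d\phi^d a_d b_d$. Already at $n=2$ this fails concretely: with $c_0=c_2=0$ and $c_1=[1,1,0,0]\in\RM{1}{2}$ we have $w=[\phi,\phi,0,0]$ and $a_1=b_1=[1,1]$; your recipe yields $U=[\phi,\phi]$, $V=[1,1]$, and $[U,\,U+\phi V]=[\phi,\phi,2\phi,2\phi]\neq w$ (the choice that works is $V=[-1,-1]$).

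These carries are not a ``routine rearrangement of finite sums''---they are essentially the whole content of the statement. The carry term $a_d b_{d-1}$ is the evaluation vector of a polynomial of degree up to $2d-1$, yet it lands in slot $d+2$, where membership in $\RM{d+2}{\cdot}$ allows degree at most $d+2$; this is problematic as soon as $2d-1>d+2$. That is precisely the ``terms of too-high degree to stay inside $\RM{d}{n}$'' obstruction you correctly flag for the multilinear-substitution route at the end of your writeup; your joint recursion has the identical obstruction, you just have not tracked it. (The paper states this as a Fact cited to Forney and gives no proof of its own, so there is no in-paper argument to compare against; but as written, your induction needs a genuinely new idea to control how the base-$\phi$ carries interact with the Reed--Muller degree filtration.)
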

\fi

\begin{fact}[{\cite[Chap.~13, \S 4]{MS}}]
  \label{fact:rmdist}
  \ % hack to get newline
  \begin{enumerate}
  \item The minimum distance of $\RM{d}{n}$ is $2^{n-d}$.  In
    particular, the characteristic vector $c_{V} \in \F_{2}^{2^{n}}$
    of any subspace $V \subseteq \F_2^n$ of dimension $k\geq n-d$ is a
    codeword of $\RM{d}{n}$.

    (The characteristic vector $c_{S} \in \F_{2}^{2^{n}}$ of a set $S
    \subseteq \F_{2}^{n}$ is defined by indexing the coordinates of
    $\F_{2}^{2^{n}}$ by elements $\alpha \in \F_{2}^{n}$, and letting
    $(c_{S})_{\alpha} = 1$ if and only if $\alpha \in S$.)
  \item There are $2^d \cdot \prod\limits_{i=0}^{n-d-1}
    \cfrac{2^{n-i}-1}{2^{n-d-i}-1} > 2^{d(n-d)}$ subspaces of
    dimension $n-d$ in $\F_2^n$.
  \end{enumerate}
\end{fact}

\begin{proof}[Proof of Theorem~\ref{thm:lower-bound}]
  Let $k \geq 0$ be an integer such that $2^n \eps \leq 2^k\leq
  2^{n+1} \eps$.  Let the received word be $r =\phi^k \cdot [1, 0,
  \ldots, 0] \in \G^{N}$, where we assume that the first coordinate is
  indexed by $0^n\in \F_2^n$.  By Fact~\ref{fact:rmdist} and
  Fact~\ref{fact:BW-RM}, for any subspace $H\subseteq \F_2^n$ of
  dimension $n-k$, we have $\phi^k \cdot c_H \in \BW_{n}$.  Notice
  that 
 \ifnum\conf=2
 \begin{align*}
  \length{r-\phi^k \cdot c_H }^2 &=\abs{\phi^k}^2 \cdot
  \length{c_H- [1, 0,\ldots, 0]}^{2}\\
  &=2^k \cdot (2^{n-k}-1)\\
  &=2^n-2^k \\
  &\leq 2^n(1-\eps).
  \end{align*}
 \else
  \[ \length{r-\phi^k \cdot c_H }^2=\abs{\phi^k}^2 \cdot
  \length{c_H- [1, 0,\ldots, 0]}^{2}=2^k \cdot (2^{n-k}-1)=2^n-2^k
  \leq 2^n(1-\eps). \]
  \fi
   By Fact~\ref{fact:rmdist}, there are at least
  $2^{k(n-k)}\geq 2^{(n-\log \frac{1}{\eps})\log \frac{1}{2\eps}}$
  subspaces $H\subset \F_2^n$ of dimension $n-k$, which completes the
  proof.
\end{proof}

\fi %end ifnum

%%% Local Variables: 
%%% mode: latex
%%% TeX-master: "BWlist"
%%% End: 

\section{List-Decoding Algorithm}
\label{sec:list-decod-alg}

In this section we prove Theorem~\ref{thm:bwlist-alg} by giving a
list-decoding algorithm that runs in time polynomial in the list size;
in particular, by Theorem~\ref{thm:listbound} it runs in time
$N^{O(\log(1/\epsilon))}$ for \rsd $(1-\epsilon)$ for any fixed
$\epsilon > 0$.  The runtime and error tolerance are optimal (up to
polynomial overhead) in the sense that the list size can be
$N^{\Omega(\log(1/\epsilon))}$ by Theorem~\ref{thm:lower-bound}, and
can be super-polynomial in $N$ for \rsd $1$ or more.

The list-decoding algorithm is closely related to the highly parallel
Bounded Distance Decoding algorithm of Micciancio and
Nicolosi~\cite{MN08}, which 
outputs the unique lattice point within \rsd $\eta<\frac14$ of the
received word (if it exists).
In particular, both algorithms work by recursively (and independently)
decoding four words of dimension $N/2$ that are derived from the
received word, and then combining the results appropriately.  In our
case, the runtime is strongly influenced by the sizes of the lists
returned by the recursive calls, and so the combinatorial bounds from
Section~\ref{sec:combinatorial-bounds} are critical to the runtime
analysis.

We need the following easily-verified fact regarding the symmetries
(automorphisms) of $\BW_{n}$.

\begin{fact}
  \label{fact:automorphisms}
  For $N=2^{n}$, the linear transformation $\calt: \C^N \to \C^N$
  given by $\calt([u,v])= \frac{\phi}{2} \cdot [u+v, u-v]$ is a
  distance-preserving automorphism of $\BW_{n}$, namely
  $\calt(\BW_{n}) = \BW_{n}$ and $\delta(x)=\delta(\calt(x))$ for all
  $x \in \C^{N}$.
\end{fact}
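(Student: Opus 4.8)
The plan is to verify the two claimed properties of $\calt$ directly from the recursive definition of $\BW_n$, using an induction on $n$ for the automorphism property and a short Euclidean-norm computation for the distance preservation. First I would handle distance preservation, which does not require induction: for $x = [u,v] \in \C^N$ with $u,v \in \C^{N/2}$, we have $\calt(x) = \frac{\phi}{2}[u+v,\,u-v]$, so $\length{\calt(x)}^2 = \frac{\abs{\phi}^2}{4}(\length{u+v}^2 + \length{u-v}^2) = \frac{2}{4} \cdot 2(\length{u}^2 + \length{v}^2) = \length{u}^2 + \length{v}^2 = \length{x}^2$, using $\abs{\phi}^2 = 2$ and the parallelogram law. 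Dividing by $N$ gives $\delta(\calt(x)) = \delta(x)$, and since $\calt$ is linear this also gives $\delta(\calt(x) - \calt(y)) = \delta(x-y)$, i.e.\ $\calt$ preserves \rsd.

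Next I would prove $\calt(\BW_n) = \BW_n$ by induction on $n$. The base case $n=0$ is degenerate (or can be taken as $n=1$): I would check that for $n=1$, $\BW_1 = \set{[a, a+\phi b] : a,b \in \G}$, and applying $\calt$ maps $[a, a+\phi b]$ to $\frac{\phi}{2}[2a + \phi b,\, -\phi b] = [\phi a + \frac{\phi^2}{2} b,\, -\frac{\phi^2}{2}b] = [\phi a + i b,\, -i b]$, and verify this lies in $\BW_1$ (note $\phi^2 = 2i$, so $\frac{\phi^2}{2} = i$); conversely $\calt$ is invertible since it preserves norms and is linear, so it suffices to show $\calt(\BW_1) \subseteq \BW_1$ together with the fact that $\calt$ is injective to conclude equality (as $\BW_1$ is a lattice and $\calt$ restricted to it is an injective endomorphism onto a sublattice of the same covolume, since $\det \calt$ has absolute value $1$). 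For the inductive step, take $w = [w_0, w_1] \in \BW_n$ with $w_0 = u$, $w_1 = u + \phi v$ for $u,v \in \BW_{n-1}$. Now $w_0$ and $w_1$ themselves decompose into halves of dimension $N/4$, and $\calt$ acting on $\C^N$ should be related to $\calt$ acting on each half — I would set up the block structure so that $\calt_N$ on $\C^N$ can be expressed via $\calt_{N/2}$ on $\C^{N/2}$ applied to appropriate combinations, then invoke the inductive hypothesis $\calt_{N/2}(\BW_{n-1}) = \BW_{n-1}$. The cleanest route is probably to use the Kronecker-product description: $\BW_n$ is generated by the rows of $W^{\otimes n}$ where $W = \begin{bmatrix} 1 & 1 \\ 0 & \phi \end{bmatrix}$, and express $\calt$ as (left or right) multiplication by a matrix that, up to a unit scalar, permutes/recombines the generators of $\BW_n$; checking $\calt \cdot W^{\otimes n} = W^{\otimes n} \cdot M$ for some $M \in GL_N(\G)$ would immediately give $\calt(\BW_n) = \BW_n$.

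Concretely, I expect the matrix identity to come down to a $2\times 2$ check: $\frac{\phi}{2}\begin{bmatrix} 1 & 1 \\ 1 & -1\end{bmatrix} \cdot W = \frac{\phi}{2}\begin{bmatrix} 1 & 1+\phi \\ 1 & 1-\phi \end{bmatrix}$, and one wants to show this equals $W \cdot M$ for some unimodular $M$ over $\G$, i.e.\ $M = W^{-1} \cdot \frac{\phi}{2}\begin{bmatrix} 1 & 1+\phi \\ 1 & 1-\phi\end{bmatrix}$; since $W^{-1} = \begin{bmatrix} 1 & -1/\phi \\ 0 & 1/\phi \end{bmatrix}$, a direct computation should yield an integer Gaussian matrix of determinant a unit (the determinant of $\frac{\phi}{2}\begin{bmatrix}1&1\\1&-1\end{bmatrix}$ is $\frac{\phi^2}{4}\cdot(-2) = -i$, a unit, and $\det W = \phi$, so $\det M = -i/\phi$ which is a unit times $\bar\phi/\abs{\phi}^2$... here I need to be a little careful and it may be that $M$ is only unimodular over $\G[1/\phi]$ rather than over $\G$, in which case the argument should instead be: $\calt$ preserves $\BW_n$ because it preserves both the lattice $\phi^n\G^N$-part and acts compatibly with the multilinear-polynomial structure of Lemma~\ref{lem:bw-multilinear}).

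The main obstacle I anticipate is precisely this last point: verifying that $\calt$ maps $\BW_n$ into itself (not just into the $\G[1/\phi]$-span), since the factor $\frac{\phi}{2}$ introduces a denominator that must be cancelled by the structure of $\BW_n$. The tension is between the $\frac{1}{2}$ in $\calt$ and the fact that $\BW_n$ is only a sublattice of $\G^N$ of index $2^{nN/2}$, not all of it. I would resolve this by the induction: on $\BW_n = \set{[u, u+\phi v] : u,v \in \BW_{n-1}}$, compute $\calt([u, u+\phi v]) = \frac{\phi}{2}[2u + \phi v,\, -\phi v] = [\phi u + \tfrac{\phi^2}{2} v,\, -\tfrac{\phi^2}{2}v] = [\phi u + i v,\, -iv]$, and then observe that the result has the form $[u', u' + \phi v']$ with $u' = \phi u + iv$ and $u' + \phi v' = -iv$, forcing $\phi v' = -iv - \phi u - iv = -\phi u - 2iv = -\phi u - \phi^2 v$, i.e.\ $v' = -u - \phi v \in \BW_{n-1}$ (closed under $\G$-linear combinations), and $u' = \phi u + i v$; but $u'$ must lie in $\BW_{n-1}$ too, and here I would need $\phi u + iv \in \BW_{n-1}$ — this requires knowing $\BW_{n-1}$ is closed under multiplication by $\phi$ and by $i$, which follows since $i = \phi^2/2$ is... again this is where I must invoke the inductive automorphism hypothesis rather than naive closure. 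The correct clean statement to induct on is therefore: $\calt_n$ is an automorphism of $\BW_n$, and the inductive step expresses $\calt_n$ in terms of $\calt_{n-1}$ acting on the half-dimension pieces after suitable recombination, which is the step I would write out carefully.
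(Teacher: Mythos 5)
The paper states this fact without proof, calling it ``easily-verified,'' so there is no in-paper argument to compare against; your task is just to produce a correct proof. Your norm-preservation computation via the parallelogram law and $\abs{\phi}^2=2$ is correct, and your direct computation
$\calt([u,u+\phi v]) = \bigl[\phi u + iv,\,-iv\bigr]$
with $u' = \phi u + iv$, $v' = -(u+\phi v)$ is exactly the right move. However, you then talk yourself into a confusion that makes the proof look harder than it is: you worry that showing $u' = \phi u + iv \in \BW_{n-1}$ requires the inductive hypothesis that $\calt$ is an automorphism at level $n-1$. It does not. Since $\BW_{n-1}$ is a $\G$-module (the paper notes, right after Definition~\ref{def:BW}, that $\BW_n$ is the set of $\G$-linear combinations of the rows of $W^{\otimes n}$; the $\G$-module property is a far more elementary fact than $\calt$-invariance and needs only its own trivial induction), closure under multiplication by $\phi$ and $i$ is automatic. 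With that in hand, no induction on the automorphism property of $\calt$ is needed at all: the one-line identity $\calt([u,u+\phi v]) = [u',u'+\phi v']$ with $u',v' \in \BW_{n-1}$ gives $\calt(\BW_n)\subseteq\BW_n$ directly from the recursive definition, and equality follows because $\calt^2 = i\cdot\mathrm{Id}$ (so $\calt^{-1} = -i\,\calt$ maps $\BW_n$ into $\BW_n$ as well), or equivalently because $\calt$ is an isometry so the sublattice $\calt(\BW_n)$ has the same covolume as $\BW_n$. You should also drop the Kronecker-product/unimodular-matrix detour: your worry about $M$ being unimodular only over $\G[1/\phi]$ is a red herring that your own cleaner direct calculation already sidesteps.
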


\newcommand{\ldbw}{\textsc{ListDecodeBW}}

\begin{algorithm}[h]
  \caption{\ldbw: List-decoding algorithm for Barnes-Wall lattices.}
  \begin{algorithmic}[1]
    \REQUIRE $r \in \C^N$ (for $N=2^{n}$) and $\eta \geq 0$.

    \ENSURE The list $L(r, \eta) \subset \BW_{n}$.

    \IF{$n=0$} \STATE output $L(r,\eta) \subset \G$ by
    enumeration. \ENDIF

    \STATE parse $r = [r_{0}, r_{1}]$ for $r_{0}, r_{1} \in \C^{N/2}$,
    and let $r_{+} = \frac{\phi}{2}(r_{0}+r_{1})$ and $r_{-} =
    \frac{\phi}{2}(r_{0}-r_{1})$, so $[r_{+},r_{-}] = \calt(r)$.

    \FORALL{$j \in \set{0,1,+,-}$}

    \STATE let $L_{j} = \ldbw(r_{j}, \eta)$.

    \ENDFOR

    \STATE \label{item:combine} for each $(b,s) \in \bit \times
    \set{+,-}$ and each pair $(w_{b}, w_{s}) \in L_{b} \times L_{s}$,
    compute the corresponding candidate vector $w=[w_{0},w_{1}] \in
    \BW_{n}$ as the appropriate one of the following:
    \begin{align*}
      [w_0,\tfrac{2}{\phi} w_{+}-w_0] \quad &, \quad
      [w_0, w_0-\tfrac{2}{\phi}w_{-}], \\
      [\tfrac{2}{\phi}w_{+}-w_1, w_1] \quad &, \quad
      [\tfrac{2}{\phi}w_{-}+w_1, w_1].
    \end{align*}

    \RETURN the set $L$ of all the candidate vectors $w$ such that
    $\delta(r,w) \leq \eta$. \label{item:output}
  \end{algorithmic}
  \label{alg:bwlist}
\end{algorithm}

%\begin{theorem}
%  \label{thm:bwlist-alg-correct}
%  On input $r\in \C^N$ and $\eta\geq 0$, Algorithm~\ref{alg:bwlist}
%  runs in time $O(N^2 \cdot \ell(\eta,n)^{2})$ and outputs $L=L(r,
%  \eta)$.
%\end{theorem}

\begin{proof}[Proof of Theorem \ref{thm:bwlist-alg}]
  We need to show that on input $r\in \C^N$ and $\eta\geq 0$,
  Algorithm~\ref{alg:bwlist} runs in time $O(N^2) \cdot
  \ell(\eta,n)^{2}$ and outputs $L=L(r, \eta)$.

  We first prove correctness, by induction.  The algorithm is clearly
  correct for $n=0$; now suppose that $n \geq 1$ and the algorithm is
  correct for $n-1$.  Adopt the notation from
  Algorithm~\ref{alg:bwlist}, and let $w = [w_{0}, w_{1}] \in
  L(r,\eta)$ for $w_{0}, w_{1} \in \BW_{n-1}$ be arbitrary.  Since
  $\delta(w,r) \leq \eta$, we have $\delta(r_0, w_0)\leq \eta$ or
  $\delta(r_1, w_1)\leq \eta$ or both, so $w_0\in L(r_0, \eta)$ or
  $w_1\in L(r_1, \eta)$ or both.  The same is true about the
  corresponding vectors after applying the automorphism $\calt$.
  Namely, letting $[w_{+}, w_{-}] = \calt(w) \in \BW_{n}$ for $w_{+},
  w_{-} \in \BW_{n-1}$, we have $[w_{+}, w_{-}] \in L([r_{+},r_{-}],
  \eta)$ and so $w_{+}\in L(r_{+}, \eta)$ or $w_{-}\in L(r_{-}, \eta)$
  or both.

  By the inductive hypothesis and the above observations, we will have
  $(w_{b}, w_{s}) \in L_{b} \times L_{s}$ for at least one choice of
  $(b,s) \in \bit \times \set{+,-}$.  The algorithm calculates the
  vector $w=[w_{0},w_{1}]$ as a candidate, simply by solving for
  $w_{0}, w_{1}$ using $w_{b}, w_{s}$ and the definition of $\calt$.
  Therefore, $w$ will appear in the output list~$L$.  And because $L
  \subseteq L(r,\eta)$, the claim follows.

  We now analyze $T(n)$, the number of operations over $\C$ for an
  input of dimension $N=2^{n}$, which is easily seen to satisfy the
  recurrence
  \ifnum\conf=2
  \begin{align*}
    T(n) &\leq 4T(n-1)+ 4\cdot \ell(\eta, n-1)^2 \cdot O(2^{n-1})\\
    &\leq 4^n \cdot T(0) 
    + \sum_{i=1}^{n} 4^i \cdot \ell(\eta,
    n-i)^{2} \cdot O(2^{n-i}) \\
    &\leq O(N^{2}) + O(2^{n}) \cdot \ell(\eta,n-1)^{2} \cdot
    \sum_{i=1}^{n} 2^{i} \\
    &=O(N^2) \cdot \ell(\eta, n-1)^{2}. \qedhere
  \end{align*}
  \else
   \begin{align*}
    T(n) &\leq 4T(n-1)+ 4\cdot \ell(\eta, n-1)^2 \cdot O(2^{n-1})
    \leq 4^n \cdot T(0) + \sum_{i=1}^{n} 4^i \cdot \ell(\eta,
    n-i)^{2} \cdot O(2^{n-i}) \\
    &\leq O(N^{2}) + O(2^{n}) \cdot \ell(\eta,n-1)^{2} \cdot
    \sum_{i=1}^{n} 2^{i} =O(N^2) \cdot \ell(\eta, n-1)^{2}. \qedhere
  \end{align*}
  \fi
\end{proof}

% \begin{proof}[Proof of Corollary~\ref{cor:bwlist-alg}]
%   By Theorem~\ref{thm:listbound} we have $\ell(1-\epsilon, n) =
%   N^{O(\log(1/\epsilon))}$.  Theorem~\ref{thm:bwlist-alg} therefore
%   gives a list-decoding algorithm that runs in time $O(N^{2} \cdot
%   \ell(1-\epsilon, n)^{2}) = N^{O(\log(1/\epsilon))}$.
% \end{proof}

\begin{remark} 
  We note that the above algorithm, like the unique decoder
  of~\cite{MN08}, can be easily parallelized.  The parallel time on
  $p$ processors (counting the number of operations in $\C$) satisfies
  the recurrence \ifnum\conf=2
  \begin{eqnarray*}
    T(n, p)=\begin{cases}
      T(n), \mbox{ if } n=0 \mbox{ or } p<4\\
      T(n-1, p/4)\\
      \quad {} +O(N \cdot \ell(\eta,n-1)^2/p + \log N) \text{ o.w.,}
    \end{cases}
  \end{eqnarray*}
  \else
  \begin{eqnarray*}
    T(n, p)=\begin{cases}
      T(n)& \mbox{ if } n=0 \mbox{ or } p<4\\
      T(n-1, p/4)+O(N \cdot \ell(\eta,n-1)^2/p + \log N) & \text{otherwise,}
    \end{cases}
  \end{eqnarray*}  
  \fi where $T(n)$ is the sequential time computed in
  Theorem~\ref{thm:bwlist-alg}.  This is because it takes $O(N \cdot
  \ell(\eta,n-1)^2/p)$ time per processor to combine the lists in
  Step~\ref{item:combine} of the algorithm, and computing the
  $\ell(\eta,n-1)^2$ distances in Step~\ref{item:output} requires
  computing sums of $N$ terms in $\C$, and takes a total of $O(N \cdot
  \ell(\eta, n-1)^2/p + \log N)$ parallel time.  Notice that when $p
  \geq N^{2} \cdot \ell(\eta, n-1)^2$, the algorithm runs in only
  polylogarithmic $O(\log^2 N)$ parallel time.  Note also that when
  the list size $\ell(\eta,n-1)=1$, our analysis specializes exactly
  to that of~\cite{MN08}.
\end{remark}

%%% Local Variables: 
%%% mode: latex
%%% TeX-master: "BWlist"
%%% End: 

\section{Discussion and Open Problems}
\label{sec:disc-open-probl}

Some immediate open questions arise from comparison to the results
in~\cite{MN08}.  Motivated by the sequential unique decoder proposed
in~\cite{MN08}, is there a (possibly sequential) list decoder that
runs in time quasilinear in $N$ and the list size, rather than
quadratic?  Also, as asked in~\cite{MN08}, is there an efficient
algorithm for solving the Closest Vector Problem (i.e.,
minimum-distance decoding) on Barnes-Wall lattices?  Note that our
lower bounds do not rule out the existence of such an algorithm.

An important variant of the list-decoding problem for codes is {\em
  local} list decoding.  In this model, the algorithm is required to
run in time polylogarithmic in the block length, and output succinct
representations of all the codewords within a given radius.  Defining
a meaningful notion of local decoding for lattices (and BW lattices in
particular) would require additional constraints, since lattice points
do not in general admit succinct representations (since one needs to
specify an integer coefficient for each basis vector).  While by the
Johnson bound we have a $\poly(n)$ list size for \rsd up to
$1/2-\poly(1/n)$, achieving a meaningful notion of local decoding in
this context would be interesting.

Another interesting direction is to find (or construct) more
asymptotic families of lattices with nice list-decoding properties.
In particular, are there generic operations that when applied to
lattices guarantee good list-decoding properties?  For codes, list
decodability has been shown to behave well under the tensoring and
interleaving operations, as demonstrated in~\cite{GopalanGR09}.  Since
at least tensoring is also well-defined for lattices, understanding
its effect in the context of list decoding is a natural further
direction.

Finally, it would be also interesting and potentially useful to
consider list decoding for norms other than the Euclidean norm, such
as the $\ell_{\infty}$ or $\ell_{0}$ norms.

%%% Local Variables: 
%%% mode: latex
%%% TeX-master: "BWlist"
%%% End: 

\section*{Acknowledgments}

We thank Eli Ben-Sasson, Daniele Micciancio, Madhu Sudan, and Santosh
Vempala for helpful discussions and comments.  We thank the anonymous
referees for their helpful suggestions, and for pointing out a bug in
our previous statement of Lemma~\ref{lem:johnson}.

%%% Local Variables: 
%%% mode: latex
%%% TeX-master: "BWlist"
%%% End: 

\bibliographystyle{plain}
\bibliography{coding}

\ifnum\conf=1
\appendix

\section{Appendix}
\label{sec:appendix}

\subsection{Missing Proofs}
\label{sec:missing-proofs}
%
%\begin{proof}[Proof of Lemma~\ref{lem:geom}]
%  Proofs of Item~\ref{item:separate-1} appear in~\cite{GS-johnson}
%  and~\cite[Chapter 10, page 71]{Bollobas}.  Actually, these sources
%  refer to the inner product space $\R^{m}$ and give a bound of $t
%  \leq 2m$ when $\inner{x_{i},x_{j}} \leq 0$ for all $i \neq j$.  Our
%  form of the statement follows immediately, because the inner product
%  space $\C^{m}$ is a dimension-$2m$ inner product space over $\R$,
%  and the inner product on $\R^{2m}$ corresponds to the real part of
%  the inner product on $\C^{m}$.
%
%  For Part~\ref{item:separate-2} we have
%  \begin{align*}
%    0\leq \Bigl\| \sum_{i} x_i \Bigr\|^2 &= \innerfit{ \sum x_i,
%      \sum x_i } \leq \sum_{i} \inner{x_{i}, x_{i}} + \sum_{i\ne j}
%    \inner{ x_i, x_j} \\
%    &= \sum_{i} \length{x_{i}}^{2} + \sum_{i\ne j} \Re \langle x_i,
%    x_j \rangle \\
%    &\leq t(\tfrac12-\eps)-t(t-1)\eps.
%  \end{align*}
%  Rearranging, we have $t\leq \frac{1}{2\eps}$.
%\end{proof}

\begin{proof}[Proof of Lemma~\ref{lem:3/4}]
  The claim is clearly true for $n=0$, so suppose $n \geq 1$; we
  proceed by induction on $n$.  Define the same notation as in the
  proof of Lemma~\ref{lem:5/8}, using \rsd bound $\frac34$ instead of
  $\frac58$.

  As before, we assume that $\delta_0\leq \frac34$ and account for the
  accompanying factor of $2$ in the list size.  This time we split the
  analysis into three cases: $\delta_0\in [0, \frac14)$, $\delta_0\in
  [\frac14, \frac58)$, and $\delta_0\in [\frac58, \frac34]$.  By
  Lemma~\ref{lem:delta-relation}, these correspond to $\delta_1\leq
  \frac34$, $\delta_1 \leq \frac58$, and $\delta_1 \leq \frac{7}{16}$,
  respectively.

  Using Corollary~\ref{cor:johnson} and Lemma~\ref{lem:5/8}, we therefore
  have
  \begin{align*}
    \ell(\tfrac34, n) &\leq 2\cdot \parensfit{ \ell(\tfrac14,
      n-1)\cdot\ell(\tfrac34, n-1)+\ell(\tfrac{5}{8}, n-1)\cdot
      \ell(\tfrac{5}{8}, n-1)+\ell(\tfrac34, n-1)\cdot
      \ell(\tfrac{7}{16}, n-1) } \\
    &\leq 2 \cdot (2 +8) \cdot \ell(\tfrac34, n-1) +
    2\cdot \ell(\tfrac{5}{8}, n-1)^{2} \\
    &\leq 20 \cdot 4 \cdot 24^{2(n-1)} + 32 \cdot 24^{2(n-1)} \\
    &\leq 4 \cdot 24^{2n}. \qedhere
  \end{align*}
\end{proof}

\subsection{Lower Bounds}
\label{sec:lower-bounds}

% For our main lower bound on the list size
% (Theorem~\ref{thm:lower-bound}) we adapt the lower bound proofs for RM codes from \cite{GopalanKZ08} to the BW settings.

For our lower bounds we make use of a relationship between
Barnes-Wall lattices and Reed-Muller codes, and then apply known lower
bounds for the latter.

\begin{fact}[{\cite[\S IV.B]{Forney88a}}]
  \label{fact:BW-RM}
  \[ \BW_{n}=\{ \sum_{d=0}^{n-1} \phi^d c_d+\phi^n c_n : c_d\in
  \RM{d}{n}, 0\leq d\leq n-1; c_n\in \G^{N}\}, \] where the embedding
  of $\F_2$ into $\C$ is given by $0\mapsto 0$ and $1\mapsto 1$.  In
  particular, any codeword $c_d\in \RM{n}{d}$ gives rise to a lattice
  point $\phi^d \cdot c_d \in \BW_{n}$,
\end{fact}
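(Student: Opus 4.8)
The plan is to prove this description of $\BW_n$ by induction on $n$, using the recursive definitions of $\BW_n$ and of the Reed--Muller codes. Write $R_n$ for the right-hand side, i.e.\ $R_n=\bigl\{\sum_{d=0}^{n-1}\phi^d c_d+\phi^n c_n : c_d\in\RM{d}{n},\ c_n\in\G^{2^n}\bigr\}$. The base case $n=0$ is immediate: the sum over $d$ is empty, so $R_0=\{\phi^0 c_0 : c_0\in\G\}=\G=\BW_0$. For the inductive step I would combine $\BW_n=\{[u,u+\phi v]:u,v\in\BW_{n-1}\}$ with the recursion $\RM{d}{n}=\{[x,x\oplus y]:x\in\RM{d}{n-1},\ y\in\RM{d-1}{n-1}\}$ (with the conventions $\RM{-1}{m}=\{\bar 0\}$ and $\RM{d}{m}=\F_2^{2^m}$ for $d\ge m$), and prove the two inclusions $R_n\subseteq\BW_n$ and $\BW_n\subseteq R_n$ separately.

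The one point that makes this nontrivial is that the embedding $\F_2\hookrightarrow\C$ is not additive: for bits $a,b\in\{0,1\}$ the embedding of $a\oplus b$ equals $a+b-2ab$, so adding two embedded $\F_2$-vectors produces a ``carry'' of size $2$ wherever they agree. The crucial algebraic fact is that $2=-i\,(1+i)^2=-i\,\phi^2$, so each carry can be re-deposited two powers of $\phi$ higher. For $R_n\subseteq\BW_n$: given $w=\sum_{d=0}^{n-1}\phi^d c_d+\phi^n c_n$, split each $c_d=[x_d,x_d\oplus y_d]$ and $c_n=[x_n,x_n']$, collect the two halves to write $w=[w_0,w_1]$, and then rewrite $w_0$ and $\tfrac1\phi(w_1-w_0)$ in the form demanded by $R_{n-1}=\BW_{n-1}$ (the inductive hypothesis): the ``honest'' parts $\phi^d x_d$, $\phi^{d-1}y_d$ and $\phi^n x_n$ land in $R_{n-1}$ directly, while each correction $-2\phi^d x_d y_d=i\,\phi^{d+2}(x_d\cdot y_d)$ must be shown to stay inside $R_{n-1}$ as well. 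The reverse inclusion $\BW_n\subseteq R_n$ is dual: given $[w_0,w_1]\in\BW_n$, expand $w_0$ and $\tfrac1\phi(w_1-w_0)$ via the inductive hypothesis and then reassemble the Reed--Muller layers of $w$ by running $\RM{d}{n}=\{[x,x\oplus y]\}$ backwards, once more bookkeeping the carries. The ``in particular'' clause falls out by taking all but one of the $c_d$ equal to zero.

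I expect the main obstacle to be exactly this carry bookkeeping. A carry produced at level $d$ involves the \emph{product} $x_d\cdot y_d$ of two low-degree evaluation vectors and sits at power $\phi^{d+2}$, and one must check that $\phi^{d+2}$ times such a vector is still an element of $R_{n-1}$ (resp.\ $R_n$). For this I would use two facts: the nesting $\RM{d}{m}\subseteq\RM{d+1}{m}\subseteq\cdots\subseteq\F_2^{2^m}$, and that $\phi^{m}$ times any $0/1$ vector lies in $\phi^{n}\G^{2^{n}}$ once $m\ge n$ --- and $\phi^{n}\G^{2^{n}}\subseteq\BW_n$ comes for free from Lemma~\ref{lem:bw-multilinear}. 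An alternative route that avoids the $\BW_n$ recursion is to argue from Lemma~\ref{lem:bw-multilinear} directly: evaluating a multilinear $p=\sum_S a_S\prod_{i\in S}x_i$ over $\G$ on $\{0,\phi\}^n$ and substituting $x_i=\phi\xi_i$ groups the value by $|S|$ as $\sum_{d}\phi^d h_d$, where $h_d$ is an evaluation vector over $\{0,1\}^n$; expanding each $a_S$ in base $\phi$ (legitimate since $\G/\phi\G\cong\F_2$) and regrouping by power of $\phi$ should, after discarding the part in $\phi^n\G^{2^n}$, exhibit $p|_{\{0,\phi\}^n}$ as $\sum_{d=0}^{n-1}\phi^d c_d$ with $c_d\in\RM{d}{n}$, and conversely. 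In either approach, the delicate step is the same: ensuring that degrees and powers of $\phi$ stay synchronized through the reduction.
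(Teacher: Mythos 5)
The paper does not prove Fact~\ref{fact:BW-RM}: it is imported verbatim with a citation to Forney~\cite{Forney88a}, so there is no in-paper argument for your proposal to be compared against. That said, your proposal has a genuine gap at exactly the place you flag, and it cannot be closed by the two tools you name (the nesting $\RM{d}{m}\subseteq\RM{d+1}{m}$ and absorption of high $\phi$-powers into $\phi^{n-1}\G^{2^{n-1}}$), because both go the wrong way. In the inclusion $R_n\subseteq\BW_n$, after splitting $c_d=[x_d,\,x_d\oplus y_d]$ with $x_d\in\RM{d}{n-1}$, $y_d\in\RM{d-1}{n-1}$, the carry is $-2\phi^{d-1}x_dy_d=i\phi^{d+1}x_dy_d$, and the product $x_dy_d$ has degree as large as $2d-1$, which \emph{exceeds} $d+1$ for every $d\ge 3$; when moreover $d+1<n-1$ the power $\phi^{d+1}$ is too small to be absorbed into the top $\G$-level. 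Nesting makes the degree requirement harder, not easier.

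In fact the obstruction is real: the ``in particular'' clause is \emph{false} for general $c_d\in\RM{d}{n}$, so no bookkeeping can save the induction. Take $n=6$ and $c_3\in\RM{3}{6}$ the $\{0,1\}$-evaluation vector of $z_1z_2z_3+z_4z_5z_6$. Applying the recursion $\BW_m=\set{[u,u+\phi v]}$ three times (splitting on $z_6$, then $z_5$, then $z_4$; the ``$u$''-half is $0$ each of the last two times) shows that $\phi^3c_3\in\BW_6$ forces $s\in\BW_3$, where $s\in\C^8$ is given by $s_\gamma=1-2\gamma_1\gamma_2\gamma_3$, i.e.\ $s=(1,1,1,1,1,1,1,-1)$. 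But one more split gives $a=(1,1,1,1)$ and $b=\tfrac1\phi(0,0,0,-2)=(0,0,0,-1+i)$, and then $b\in\BW_2$ would require $(0,\,i)\in\BW_1$, i.e.\ $i\in\phi\G$, which fails since $i/\phi=\phi/2\notin\G$. (Equivalently, via Lemma~\ref{lem:bw-multilinear}, the $x_1x_2x_3$-coefficient of $s$ in the variable $x=\phi\gamma$ is $-2/\phi^3=(-1-i)/2\notin\G$.) So $\phi^3c_3\notin\BW_6$, and the set $R_n$ is not even closed under addition once $n\ge 6$.

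The special case the paper actually invokes (in the proof of Theorem~\ref{thm:lower-bound}) is $c_d=c_H$ the characteristic vector of a codimension-$k$ subspace $H$, and for those the clause does hold: after applying an $\F_2$-linear change of coordinates (under which $\BW_n$ is invariant) one may take $H=\set{\alpha_1=\dots=\alpha_k=0}$, whose \emph{integer} multilinear representation is $\prod_{j\le k}(1-\alpha_j)$ of degree exactly $k$, so all coefficients of $\phi^k c_H$ in $x=\phi\alpha$ lie in $\G$. I would recommend either restricting the statement to those $c_d$, or proving a genuinely correct coset-chain version in place of the Fact as written; in either case your carry-analysis is pointing at a real issue and not merely a missing detail.
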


\begin{fact}[{\cite[Chap.~13, \S 4]{MS}}]
  \label{fact:rmdist}
  \ % hack to get newline
  \begin{enumerate}
  \item The minimum distance of $\RM{d}{n}$ is $2^{n-d}$.  In
    particular, the characteristic vector $c_{V} \in \F_{2}^{2^{n}}$
    of any subspace $V \subseteq \F_2^n$ of dimension $k\geq n-d$ is a
    codeword of $\RM{d}{n}$.

    (The characteristic vector $c_{S} \in \F_{2}^{2^{n}}$ of a set $S
    \subseteq \F_{2}^{n}$ is defined by indexing the coordinates of
    $\F_{2}^{2^{n}}$ by elements $\alpha \in \F_{2}^{n}$, and letting
    $(c_{S})_{\alpha} = 1$ if and only if $\alpha \in S$.)
  \item There are $2^d \cdot \prod\limits_{i=0}^{n-d-1}
    \cfrac{2^{n-i}-1}{2^{n-d-i}-1} > 2^{d(n-d)}$ subspaces of
    dimension $n-d$ in $\F_2^n$.
  \end{enumerate}
\end{fact}

\begin{proof}[Proof of Theorem~\ref{thm:lower-bound}]
  Let $k \geq 0$ be an integer such that $2^n \eps \leq 2^k\leq
  2^{n+1} \eps$.  Let the received word be $r =\phi^k \cdot [1, 0,
  \ldots, 0] \in \G^{N}$, where we assume that the first coordinate is
  indexed by $0^n\in \F_2^n$.  By Fact \ref{fact:rmdist} and
  Fact~\ref{fact:BW-RM}, for any subspace $H\subseteq \F_2^n$ of
  dimension $n-k$, we have $\phi^k \cdot c_H \in \BW_{n}$.  Notice
  that $\length{r-\phi^k \cdot c_H }^2=\abs{\phi^k}^2 \cdot
  \length{c_H- [1, 0,\ldots, 0]}^{2}=2^k \cdot (2^{n-k}-1)=2^n-2^k
  \leq 2^n(1-\eps)$.  By Fact~\ref{fact:rmdist}, there are at least
  $2^{k(n-k)}\geq 2^{(n-\log \frac{1}{\eps})\log \frac{1}{2\eps}}$
  subspaces $H\subset \F_2^n$ of dimension $n-k$, which completes the
  proof.
  % 
  % \cnote{I took the kissing number bound out of the theorem statement
  %   because it is not ``new.''}
  % To show the bound at minimum distance, i.e. for $\eps=0$,
  % in the definition of BW given by Lemma \ref{lem:lbsrm} consider the
  % lattice points where $c_d=0\in \RM{d}{n}$ if $d\not =
  % \lfloor{n/2}\rfloor$.  By Fact \ref{fact:rmdist}, $\RM{d}{n}$
  % contains all the codewords whose support are on vector spaces of
  % dimension $n-d$ in $\F_2^n$, and there are $2^{d(n-d)}$ such
  % vectors. By considering the mapping $\psi:\{0, \phi\}^n\ra\F_2^n$
  % given by $0\ra 0$ and $\phi \ra 1$ it follows that there are
  % $2^{cn^2}$ corresponding vectors in $\BW_{n}$ of square norm
  % $\|\phi\|^d 2^{n-d}=2^n$, which are contained in the list of radius
  % $1$ when the received vector is $0\in G^{2^n}$.
\end{proof}

%%% Local Variables: 
%%% mode: latex
%%% TeX-master: "BWlist"
%%% End: 

\fi

\end{document}